\newtheorem {theorem} {Theorem}
\newtheorem {prop} [theorem] {Proposition}
\newtheorem {lemma} [theorem] {Lemma}
\newtheorem {remark} {Remark}
\begin{document}

\author{Mariana Álvarez $^1$, Alexander Alegría $^1$, Andr\'es Rivera $^1$ and \\ Sebastián Pedersen $^2$}


\title[A Crime/SIR Optimal Control Framework for Youth-Oriented Crime Prevention in Urban Settings]{A Crime/SIR Optimal Control Framework for Youth-Oriented Crime Prevention in Urban Settings}


\address{$^1$ Departamento de Ciencias Naturales y Matem\'aticas Pontificia Universidad Javeriana Cali, Facultad de Ingenier\'ia y Ciencias, Calle 18 No. 118-250 Cali, Colombia.}
\address{$^2$ Departamento de Matem\'atica, Facultad de Ciencias Exactas y Naturales, Universidad de Buenos Aires. Ciudad Universitaria Pab. I, (1428) Buenos Aires, Argentina.}

\email{maalvarezcano1123@javerianacali.edu.co,aalegria@javerianacali.edu.co}\emph{}
\email{amrivera@javerianacali.edu.co,spedersen@dm.uba.ar}\emph{}

\subjclass[2010]{49J15, 92D25, 34D23, 93C15.}

\keywords{Crime modeling; Youth crime prevention; Optimal control; Compartmental models; Urban crime dynamics.}

\date{}
\dedicatory{}
\maketitle

\begin{center}\rule{0.9\textwidth}{0.1mm}
\end{center}
\begin{abstract}
This paper proposes a mathematical framework based on optimal control theory to analyze youth-related criminal dynamics in urban settings. Crime is modeled through an SIR-type compartmental system describing transitions between susceptible, criminally involved, and rehabilitated populations. Public interventions are formalized via the optimal control problem
\begin{equation*}
\min_{u_1,u_2,u_3}\int_{0}^{t_{\text{F}}} \left( I(t) -
R(t) + \frac{\mathscr{C}_1}{2} u_1^2(t) + \frac{\mathscr{C}_2}{2} u_2^2(t) + \frac{\mathscr{C}_3}{2} u_3^2(t) \right) \, dt,
\end{equation*}
subject to the SIR dynamics
\begin{equation*}
\left\{
\begin{aligned}
    \dot{S} &= \Lambda - (1-u_1)h(S)I - \mu S + ((1+u_3)\gamma_2)I + \rho \Omega R, \\
    \dot{I} &= (1-u_1)h(S)I - (\mu + \delta_1)I - ((1+u_2)\gamma_1)I - ((1+u_3)\gamma_2)I + (1-\Omega)\rho R,\\
    \dot{R} &= ((1+u_2)\gamma_1)I - (\mu + \delta_2 + \rho)R.
\end{aligned}
\right.
\end{equation*}
where the control variables represent preventive, punitive, and social reintegration policies under resource constraints. Analytical results establish positivity, invariance, equilibrium exis\-tence, and local stability properties, including the characterization of crime-free and crime-endemic equilibria through a threshold parameter. As a case study, the model and optimal policies are numerically validated using data from En la Buena!, a youth-oriented government program currently implemented by the Municipality of Santiago de Cali, Colombia, illustrating the cost-effectiveness of integrated intervention strategies and how optimal policy combinations can reduce youth involvement in criminal dynamics in urban contexts marked by structural inequality.

\end{abstract}

\section{Introduction}

Violent crime remains one of the most persistent social challenges in Colombia, with particularly severe consequences in large urban centers. According to official records from \citet{tasaHomicidios2025} and the National Police, more than 13,000 homicides were reported throughout the country in 2024, of which approximately $950$ occurred in \textit{Santiago de Cali} (from now on Cali), representing more than $7\%$ of the national total. However, the city represents only about $4\%$ of Colombia’s population \citep{tasaHomicidios2025}. This disproportional burden reflects long-standing structural inequalities, including high unemployment rates, social and opportunity gaps, and deficiencies in educational coverage \citep{eder1}. Together, these factors sustain criminal dynamics and contribute to a widespread perception of insecurity among the population.

Although crime rates in Cali have declined since their peak in the early 2010s, levels of violence remain significantly above the national average and exhibit strong spatial and demographic concentration. In 2013, Cali accounted for more than $13\%$ of all homicides in Colombia, a figure that remains remarkably high relative to its population share \citep{tasaHomicidios2025}. Empirical evidence indicates that criminal violence in the city is closely linked to retaliatory dynamics, territorial disputes, and the persistence of organized violent groups. Reports by the Conflict Analysis Resource Center document that by 2014, at least seven organized violent groups were already operating in Cali, with a particularly strong impact on the youth population (CERAC, 2014).


In response to this scenario, local administrations have implemented a variety of public policies aimed at reducing crime and improving social conditions, particularly among youth exposed to violent environments. These initiatives combine surveillance strategies, law enforcement actions, and preventive social programs to promote social inclusion and reduce structural risk factors associated with violence (Alcaldía de Santiago de Cali, 2023b; 2024). However, despite sustained institutional efforts, criminal structures remain active and homicide rates continue to reflect deeply rooted social and territorial inequalities. 
This situation raises a fundamental policy question: how can limited public resources be allocated in a way that maximizes their impact on crime reduction while minimizing social and economic costs? Addressing this question requires analytical tools capable of capturing the dynamic and interactive nature of criminal behavior, as well as the heterogeneous effects of public interventions over time.

In this paper, we propose a mathematical framework inspired by epidemiological SIR models to study the dynamics of criminal behavior in Cali. Within this framework, crime is modeled as a population process in which individuals transition between susceptible, criminally involved, and rehabilitated states. Social interaction, exposure to violence, deterrence, and reintegration mechanisms, widely discussed in sociological and economic theories of crime \citep{sutherland1992principles,agnew1992foundation,becker1968, ibrahim2023mathematical,ibrahim2023optimal,sooknanan2017delinquent}, are explicitly represented, allowing these theoretical perspectives to be translated into a system of nonlinear differential equations. Public policies are incorporated as control variables acting on key transition rates, enabling the evaluation of preventive, punitive, and social reintegration strategies within a unified optimal control framework  \citep{Sooknanan2013AnotherWO,JuanNuno,Idisi2026}. The previous approach to this difficult social problem is currently being considered by various researchers around the world \citep{Gonzalez-Parra03042018,Mataru,Calatayud02012025,Ramponi,Idisi2026}

The main contribution of this work is twofold. First, it provides a mathematically rigorous model that captures some essential sociological mechanisms underlying the dynamics of urban crime. Second, it provides a policy-oriented analytical tool that enables decision-makers to assess trade-offs among different types of interventions under resource constraints. The resulting framework bridges sociological theory and mathematical modeling, contributing to the growing literature on quantitative approaches to social problems.

The remainder of the paper is organized as follows. Section 2 presents the social and theoretical framework of criminality in Cali, combining classical sociological theories with empirical evidence on the spatial and demographic distribution of crime in the city. Section 3 describes the main public policies and intervention programs implemented in Cali and establishes their interpretation within the modeling framework. Section 4 introduces the Crime/SIR mathe\-matical model, discusses its assumptions, and analyzes fundamental properties such as posi\-tivity, invariance, equilibrium points, and local stability. Section 5 formulates and analyzes the associated optimal control problem, characterizing cost-effective policy strategies and deriving necessary optimality conditions. Section 6 presents numerical simulations illustrating the model’s behavior and the impact of optimal policies. Finally, Section 7 concludes the paper with a discussion of the main findings, policy implications, and directions for future research.

\section{A social theoretical frame of criminality in Santiago de Cali}\label{sec-2}
Crime is a complex social phenomenon that has been analyzed from multiple theoretical perspectives in sociology, criminology, and economics. From a sociological standpoint, differential association theory argues that criminal behavior is learned through sustained interaction within social environments where delinquent norms are transmitted and reinforced \citet{sutherland1992principles}. In addition, general strain theory interprets crime as a response to prolonged exposure to adverse social conditions such as poverty, unemployment, discrimination, and family conflict, which generate frustration and increase the likelihood of delinquent behavior \citet{agnew1992foundation}. From an economic perspective, rational choice theory conceptualizes crime as the outcome of a cost–benefit evaluation, emphasizing the role of incentives, deterrence, and expected punishment in individual decision-making \citet{becker1968}.

Despite their conceptual differences, these approaches converge in highlighting that criminal behavior arises from the interaction of social, economic, cultural, and institutional factors. Crime not only produces direct victims, but also erodes social cohesion, weakens trust in public institutions, and generates substantial economic and health-related costs, including loss of productivity, physical injury, psychological trauma, and premature mortality. These multidimensional impacts justify the need for analytical frameworks capable of integrating sociological mechanisms with dynamic and quantitative methods.

\subsection{Crime in Cali}
Citizen security remains a central concern in Cali. Although official statistics indicate a gradual decline in homicide rates since the early 2010s, violence levels remain persistently high relative to other major Colombian cities and to the national average. This temporal evolution is illustrated in Figure \ref{fig:homicidios-linea}, which compares homicide rates in Cali, Bogotá, and Colombia over time. The figure shows that even during periods of overall national decline, Cali consistently exhibits higher rates of homicide, underscoring the structural nature of violence in the city.
\begin{figure*}[h]
\centering
    \parbox{\linewidth}{\centering\textit{
        Homicide rate: Cali, Bogotá D.C., and the national average.
    }}
    \includegraphics[width=12cm]{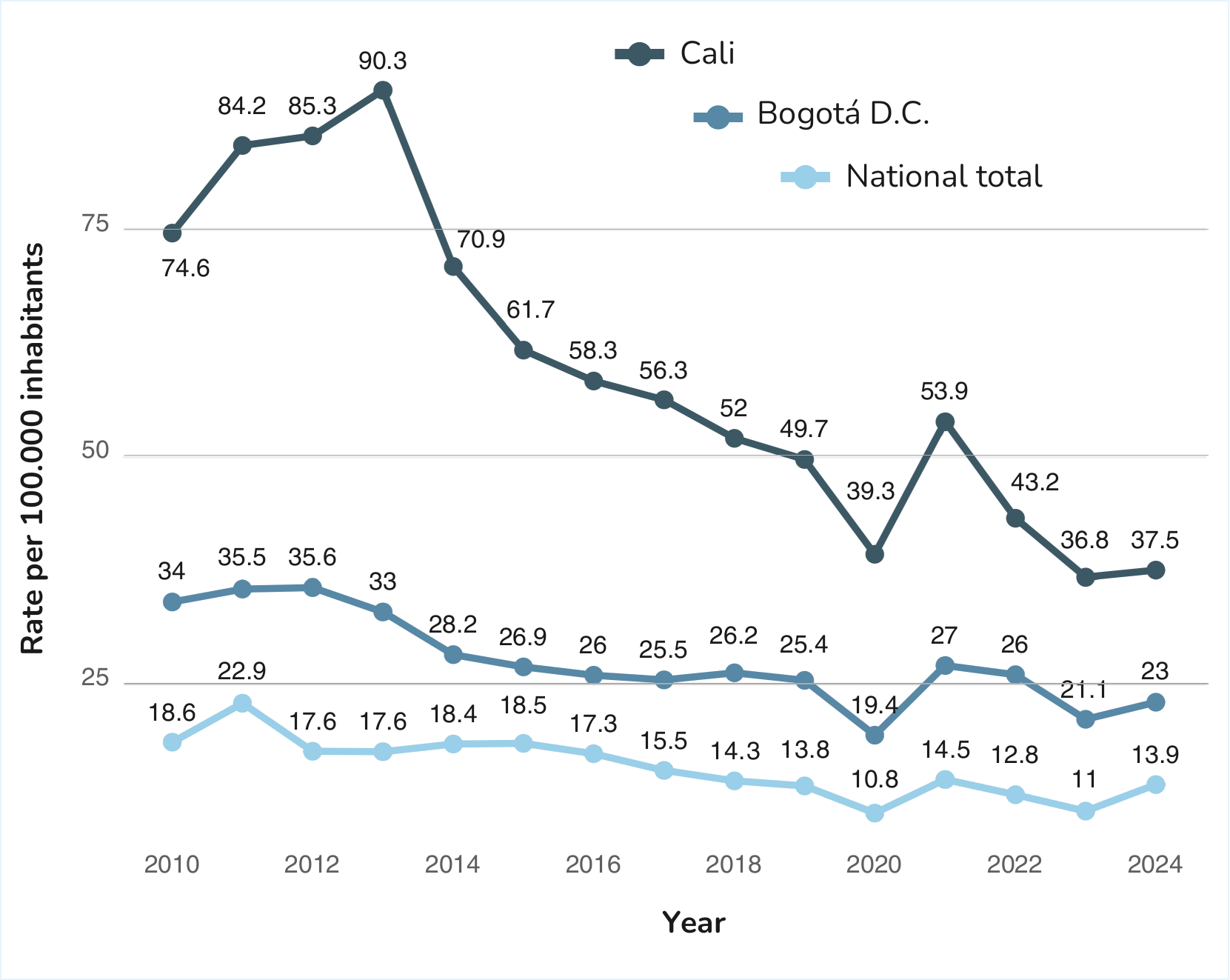}
    \caption{\footnotesize \textit{Source:} Created by the authors using data from the Ministry of Justice and Law.}   
    \label{fig:homicidios-linea}
\end{figure*}
Beyond temporal persistence, criminal violence in Cali displays a strong spatial and socioeconomic concentration. As shown in Figure 2, homicide rates are unevenly distributed throughout urban territory and are closely associated with socioeconomic strata. Neighborhoods in the lowest strata concentrate the highest levels of lethal violence, whereas areas with higher socioeconomic status exhibit substantially lower rates. This spatial pattern also reflects long-standing territorial inequalities in access to education, employment opportunities, public services, and institutional protection.

Together, Figures \ref{fig:homicidios-linea} and \ref{mapa_bivariado} highlight that violence in Cali is both temporally persistent and territorially embedded. These empirical regularities support the interpretation of crime as a dynamic social process shaped by exposure, interaction, and structural inequality rather than as isolated individual events.
\begin{figure*}[h]
\centering
 \parbox{\linewidth}{\centering\textit{
Map showing the relationship between homicide rates (H rate) and socioeconomic stratum (S stratum). In the matrix, the lowest levels of both variables appear in the lower-left corner, while the highest levels appear in the upper-right corner.}} 
    \includegraphics[width=11cm]{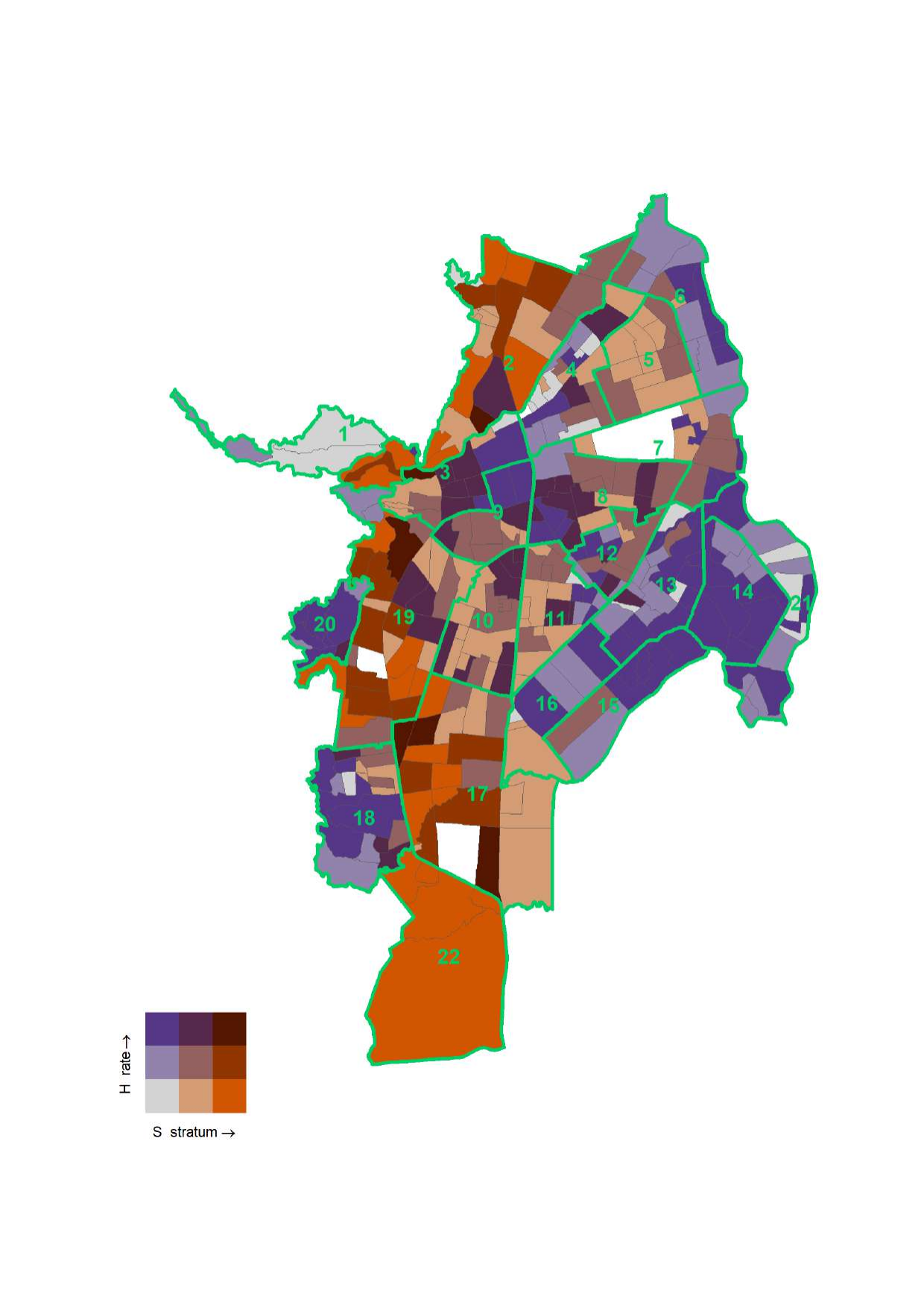}
    \caption{\footnotesize\textit{Source:} Created by the authors, using data from Santiago de Cali Security Observatory and DANE (2018)).}
    \label{mapa_bivariado}
\end{figure*}
This perspective motivates the use of population-based dynamic models. In particular, SIR-type frameworks provide a natural analogy to describe transitions between individuals who are vulnerable to criminal involvement, actively engaged in criminal behavior, and those who have left such dynamics. The corresponding compartmental structure adopted in this paper is summarized in Figure \ref{fig:diagrama compartimental} in Section \ref{sec-4}, which presents the conceptual diagram underlying the mathematical model introduced in the following section. The empirical patterns described above, namely  \emph{the temporal persistence of violence, its strong territorial concentration, and its disproportionate impact on young people}, underscore the need for public interventions that go beyond short-term enforcement strategies. In particular, they highlight the relevance of policies capable of altering the mechanisms of exposure, deterrence, and reintegration at the population level. This motivates the examination of recent public policy initiatives implemented in Cali, which serve both as institutional responses to urban violence and as empirical foundations for the mathematical model developed in this paper.

\section{Public policies and intervention programs in Cali}\label{sec-3}
In recent years, local administrations in Cali have adopted a comprehensive approach to crime reduction that combines law enforcement actions with preventive and social inclusion strategies. These policies recognize that youth participation in criminal dynamics is closely linked to structural vulnerabilities such as unemployment, school dropout, territorial exclusion, and limited access to opportunities. Consequently, interventions have increasingly focused on modifying social environments and life trajectories rather than relying exclusively on punitive measures.

Among the most relevant initiatives is the youth-oriented program En la Buena!, implemented by the Mayor’s Office of Santiago de Cali for the period 2024–2027. This program consolidates and refines strategies developed in previous administrations, integrating psychosocial support, educational reintegration, employability pathways, and community-based interventions. Its primary objective is to prevent youth violence and reduce the risk of young people being drawn into criminal dynamics, particularly in territories characterized by high levels of violence and social vulnerability.

En la Buena! targets young people who are either at risk of involvement in crime, currently involved in violent dynamics, or transitioning out of the Juvenile Criminal Responsibility System (SPRA). The program operates through a territorial approach that prioritizes communes and corregimientos with persistently high homicide rates. Interventions are implemented by interdisciplinary teams and are designed to address key risk factors such as substance use, school dropout, lack of employment opportunities, and weak social cohesion. From an analy\-tical perspective, En la Buena! provides a natural empirical framework for population-based modeling. The program explicitly distinguishes between young people who are vulnerable but not yet involved in crime, those actively engaged in violent or criminal behavior, and those undergoing rehabilitation and social reintegration processes. These categories align closely with the susceptible, involved, and recovered compartments of the Crime/SIR model introduced in the next section.

In this study, data derived from the implementation of En la Buena! are used to inform parameter selection and to validate the model numerically. This empirical grounding ensures that the proposed mathematical framework is not purely theoretical but is instead calibrated to an actual public policy context. As such, the model allows for the evaluation of alternative intervention scenarios and supports the analysis of cost-effective strategies to reduce youth involvement in crime under realistic institutional constraints. Building on this policy context, the next section introduces the Crime/SIR mathematical model. The model forma\-lizes the mechanisms of exposure, deterrence, and reintegration described above, providing a dynamic framework for analyzing the impact of youth-oriented public policies and assessing their effectiveness through numerical simulations based on real program data.


\section{Crime/SIR mathematical model}\label{sec-4}
In the specific case of our city, Cali, the young people included in the pathways of the En la Buena! program come from areas with high social conflict, characterized by the presence of multiple risk factors associated with the incidence of violent events.

According to the Security Observatory (2025), this initiative has included young people linked to the Adolescent Criminal Responsibility System, intending to prevent recidivism; members of ``social barrismo," \footnote{A movement that aims to transform the passion for soccer and organized fan groups into agents of positive social change by promoting coexistence, civic engagement, and community development.} seeking their integration, and vulnerable young people from communes with high rates of violence. The program aims to ``\textit{improve employability and economic stability, facilitating access to education, employment, and entrepreneurship}."


Based on this previous information, the following population structure for the program's youth participants is proposed:

\begin{itemize}
\item[$\dagger)$] \textbf{Susceptible youth (S)} Program participants who have not been involved in violent dynamics.
\item[$\dagger)$] \textbf{Infected youth (I)} Program participants who have been involved in violent dynamics. Some have even been arrested or faced judicial proceedings, but have not been processed through the \emph{Juvenile Criminal Responsibility System (SRPA)}
\item[$\dagger)$] \textbf{Recovered youth (R)} Program participants who have been involved in violent dynamics and have gone through the Juvenile Criminal Responsibility System.
\end{itemize}

For a fixed time period $T\in \mathbb{R}_{+}$, let us define the functions $S=S(t)$, $I=I(t)$ and $R=R(t)$, for $t\in [0,T] $ that represent the number of teenagers in the group $S$, $I$ and $R$ at time $t$, respectively. Therefore, the total number of teens at that particular time is given and denoted by $N(t)=S(t)+I(t)+R(t)$. 

In consequence, following the classic SIR epidemiology model, see Figure \ref{fig:diagrama compartimental}, we consider the following system of differential equations

\begin{figure}[h]
\centering
\begin{overpic}[width = 0.8\textwidth, tics = 5]{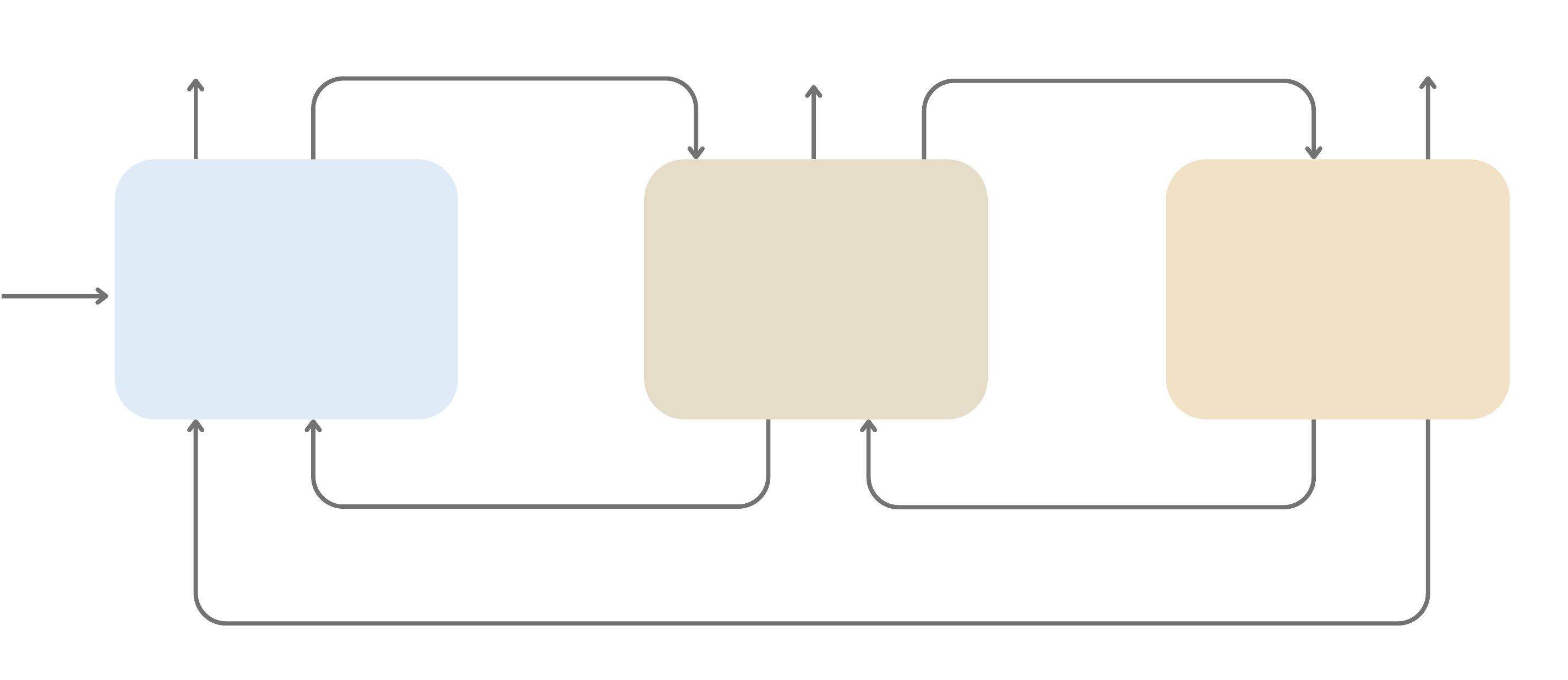}
\put(9,36.5){\large{$\phi$}}
\put(1,25.5){\large{$\Lambda$}}
\put(28,40.5){\large{$h(S)I$}}
\put(45,40.5){\large{$(\phi+\delta_1)I$}}
\put(70,40.5){\large{$\gamma_1 I$}}
\put(85,40.5){\large{$(\phi+\delta_2)R$}}
\put(49,5.1){$\rho \Omega R$}
\put(63,7.5){$(1-\Omega)\rho R$}
\put(9.5,26.5){\text{Susceptible}}
\put(13,22.5){\text{youth}}
\put(45,26.5){\text{Infected}}
\put(47,22.5){\text{youth}}
\put(77,26.5){\text{Recovered}}
\put(81,22.5){\text{youth}}
\put(29.44,7.5){\large{$\gamma_2 I$}}
\end{overpic}
\caption{Compartmental diagram of the  system \eqref{SIR-criminal}}
\label{fig:diagrama compartimental}
\end{figure}


\begin{equation} \label{SIR-criminal}
\left\{
\begin{aligned}
    \dot{S} &= \Lambda - \frac{\alpha SI}{1+\alpha \beta S} - \phi S +\gamma_2\,I\, + \rho \Omega R, \\
    \dot{I} &=\frac{\alpha SI}{1+\alpha \beta S} - (\phi + \delta_1)I -\gamma_1\,I - \gamma_2\,I + (1-\Omega)\rho R,\\
    \dot{R} &=\gamma_1\,I - (\phi + \delta_2 + \rho)R, 
\end{aligned}
\right.
\end{equation}

\noindent
All parameters in \eqref{SIR-criminal} are nonnegative, and the function
\[
h(S)=\frac{\alpha S}{1+\alpha \beta S}, \quad \text{with} \quad \alpha, \beta \geq 0,
\]
It is a Holling Type II functional response in $S$, meaning that the rate at which individuals from group $I$ victimize or capture members of group $S$. The set of parameters is fully described in Table 1.

\begin{table}[h]
\centering
\caption{Description of parameters at \eqref{SIR-criminal}.}
\renewcommand{\arraystretch}{1.3} 
\begin{tabular}{@{}p{2.5 cm}p{12.5cm}@{}}
\toprule
\textbf{Parameter} & \textbf{Interpretation} \\
\midrule
$\Lambda$ & Rate of flow of adolescents at the government program.\\
$\phi$ & Attrition: rate at which youth leave the government program.\\
$\delta_1$ & Exit rate associated with risk of delinquency involvement youth desistment rate from delinquency activities within group $I$.\\
$\delta_2$ & Exit rate associated with risks in correctional facilities in the group \textit{R}. \\
$\Omega$ & The fraction of individuals from group $R$ who return to class $S$ after completing the recovery process in the \emph{SRPA}.\\
$\rho$ & Exit rate from correctional facilities. \\
$\gamma_1$ & Apprehension rate. \\
$\gamma_2$ &  Rate at which individuals from group $I$ desist from delinquent activities. \\
$\alpha$ & Delinquent group attack rate, that is, the rate at which individuals from group $I$ find a victim in group $S$ per unit of density of group $S$. \\
$\beta$ & \textit{(Handling Time)} Victimization time per unit of density of individuals from group $S$.\\
\bottomrule
\end{tabular}
\label{table:parameters}
\end{table}
While the program admits young people who may belong to any of the three groups defined in the model, conditions observed during strategy execution and analysis of the characterization database, provided by \citet{enlabuena2024}, indicate that the vast majority of participants belong to group $S$. This evidence justifies the simplification adopted in the model, where new program entries are considered to primarily originate from the susceptible population, omitting differentiated entry rates for the other two groups without compromising the structural validity of the analysis.


Conversely, the model includes a differential effect on the program exit rate for young people belonging to the different groups of the system ($S$, $I$ and $R$), based on a natural attrition rate denoted by $\phi$. For individuals in groups $I$ and $R$, additional risks are incorporated that make them more prone to abandoning the program or being withdrawn from it, since their participation in violent dynamics and criminal activities exposes them to greater threats to their life and physical integrity (such as brawls, personal attacks, or even retaliation).


Similarly, it is acknowledged that these groups ($I$ and $R$) face greater peer pressure to abandon the program, as their continued participation can represent a threat to the interests of criminal schemes. These structures can exert coercion to prevent young people from disengaging from illegal activities and integrating into the proposed intervention projects. These effects are captured in the additional rates $\delta_1$ and $\delta_2$, which are defined as positive, given that they increase the natural program attrition rate $\phi$.

\subsection{On the dynamics of the Crime/SIR model}
Let us begin the analysis of system \eqref{SIR-criminal} by defining $X:(S,I,R)^{tr}$. Under this notation, 
\[
X(t)=(S(t),I(t),R(t))^{tr}, \qquad \dot{X}(t)=(\dot{S}(t),\dot{I}(t),\dot{R}(t))^{tr},
\]
and
\begin{equation}\label{SIR-criminal-system}
\dot{X}(t)=F(X(t)), \quad \text{with} \quad F(X)=\begin{pmatrix}
\Lambda - h(S)I - \phi S +\gamma_2\,I\, + \rho \Omega R \\
h(S)I - (\phi + \delta_1)I -\gamma_1\,I - \gamma_2\,I + (1-\Omega)\rho R\\
\gamma_1\,I - (\phi + \delta_2 + \rho)R
\end{pmatrix}.
\end{equation}

\noindent
Define the sets
\[
\begin{split}
\mathbb{R}^3_{+}&=\left\{(x_1,x_2,x_3) \in \mathbb{R}^3:x_1,x_2,x_3\geq 0
\right\}, \\
\mathbb{R}^3_{++}&=\left\{(x_1(t),x_2(t),x_3(t)) \in \mathbb{R}^3:t\geq 0,\,\,  x_1(t),x_2(t),x_3(t)\geq 0
\right\}.
\end{split}
\]
\begin{theorem}\label{invarianza}(Positivity and Invariance)
Let $\textbf{X}_0=(S_0,I_0,R_0)\in \mathbb{R}^3_{+}$. Then there exists a unique solution $\Gamma(t):=(S(t,\textbf{X}_0),I(t,\textbf{X}_0),R(t,\textbf{X}_0))$ of \eqref{SIR-criminal-system} satisfying  
\[
S(0,\textbf{X}_0)=S_0, \quad I(0,\textbf{X}_0)=I_0, \quad  R(0,\textbf{X}_0)=R_0  \quad \text{and} \quad \Gamma(t)\in \mathbb{R}^{3}_{++},
\]
defined in some interval $t\in [0,w)$ with $0\leq w\leq \infty$. Moreover, if $\Gamma(0)=\textbf{X}_0\in \mathcal{D}$ with
\[
\mathcal{D}=\left\{(S,I,R) \in \mathbb{R}^3_{+}:S+I+R\leq \Lambda/\phi
\right\},
\]
then $\Gamma(t)\in \mathcal{D}$ for all $t\in [0,\infty)$.
\end{theorem}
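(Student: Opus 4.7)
The plan is to establish the three assertions in order: local existence and uniqueness, forward invariance of $\mathbb{R}^3_+$, and forward invariance of the bounded region $\mathcal{D}$, which then upgrades local to global existence. The only nonlinearity in the vector field $F$ of \eqref{SIR-criminal-system} is the Holling term $h(S)I = \alpha SI/(1+\alpha\beta S)$, which is $C^\infty$ on $\{S > -1/(\alpha\beta)\}$ and in particular on an open neighborhood of $\mathbb{R}^3_+$. Hence $F$ is locally Lipschitz near any initial datum $\textbf{X}_0 \in \mathbb{R}^3_+$, and the Picard--Lindelöf theorem produces a unique maximal solution $\Gamma(t)$ on some interval $[0,w)$ with $w \in (0,\infty]$.

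For the positivity claim, I would argue face-by-face on $\partial \mathbb{R}^3_+$. Direct substitution gives
\begin{align*}
\dot S\big|_{S=0} &= \Lambda + \gamma_2 I + \rho\Omega R \geq 0, \\
\dot I\big|_{I=0} &= (1-\Omega)\rho R \geq 0, \\
\dot R\big|_{R=0} &= \gamma_1 I \geq 0,
\end{align*}
whenever the remaining coordinates are nonnegative, so $F$ has nonnegative outward component on each coordinate hyperplane bounding $\mathbb{R}^3_+$. To turn this observation into a rigorous invariance statement, I would rewrite each equation as a linear non-autonomous ODE with nonnegative source term; for instance,
\[
\dot I - \bigl(h(S(t)) - \phi - \delta_1 - \gamma_1 - \gamma_2\bigr)I = (1-\Omega)\rho R,
\]
so by the variation-of-constants formula $I(t) \geq 0$ as long as $R \geq 0$, and analogous reasoning handles $R$ (nonnegative if $I \geq 0$) and $S$ (nonnegative thanks to $\Lambda, \gamma_2 I, \rho\Omega R \geq 0$). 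A cleaner alternative is to invoke Nagumo's tangency theorem, whose hypotheses are verified by the boundary computation above.

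For the invariance of $\mathcal{D}$, set $N(t) = S(t) + I(t) + R(t)$ and add the three equations in \eqref{SIR-criminal}. The bilinear terms $\pm h(S)I$, $\pm \gamma_2 I$ and $\pm \gamma_1 I$ cancel, and $\rho\Omega R + (1-\Omega)\rho R = \rho R$, leaving
\[
\dot N = \Lambda - \phi S - (\phi+\delta_1) I - (\phi+\delta_2) R \leq \Lambda - \phi N,
\]
where the inequality uses positivity and $\delta_1, \delta_2 \geq 0$. A standard comparison argument on $\dot N + \phi N \leq \Lambda$ (e.g., multiplying by $e^{\phi t}$ and integrating) then gives $N(t) \leq \max\{N(0), \Lambda/\phi\}$, so if $N(0) \leq \Lambda/\phi$ then $N(t) \leq \Lambda/\phi$ for all $t \in [0,w)$, i.e., $\Gamma(t) \in \mathcal{D}$. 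Since $\Gamma$ remains in the compact set $\mathcal{D}$ and $F$ is continuous there, the classical continuation criterion forces $w = \infty$.

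The main technical obstacle is the positivity step: the boundary-flux computation is immediate, but turning it into a watertight forward-invariance proof requires either a careful variation-of-constants argument in which the three equations are handled in an order compatible with their coupling through $h(S)I$ and the cross terms, or a properly justified appeal to Nagumo's invariance theorem. Once positivity is in hand, the estimates on $N$ and the global extension are routine.
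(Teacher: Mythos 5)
Your proposal is correct and follows essentially the same route as the paper: Picard--Lindel\"of for local existence, a boundary analysis of the vector field on $\partial\mathbb{R}^3_+$ for positivity, the comparison $\dot N \leq \Lambda - \phi N$ for invariance of $\mathcal{D}$, and the continuation criterion for global existence. If anything, your handling of the positivity step is more careful than the paper's (which argues by contradiction at a first touching time for $S$ and dismisses $I$ and $R$ with ``similarly''), since you explicitly flag that $\dot I|_{I=0}$ and $\dot R|_{R=0}$ are only nonnegative and that the coupled sign argument needs a variation-of-constants or Nagumo-type justification.
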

\begin{proof}
Clearly, the function $F(X)$ is locally Lipschitz, then by the Picard-Lindelöf Theorem (see theo. 2.2 in \citep{teschl}), for a given initial condition $S(\tau_0)\ge 0, I(\tau_0)\ge 0$ and $R(\tau_0)\ge 0$, there exists a unique solution
\[
\Gamma(t,\tau_{0})=(S(t,S(\tau_0)),I(t,I(\tau_0)),R(t,R(\tau_0))),
\]
of \eqref{SIR-criminal-system} defined in $J=[\tau_{0},\tau_{0}+\varepsilon)$ for some $\varepsilon>0$. Now, suppose that there is $\tilde{t} \in J$ such that 
\[
S(\tilde{t},S(\tau_{0}))=0, \quad \text{and} \quad I(\tilde{t},I(\tau_{0})),\, R(\tilde{t},R(\tau_{0}))\geq 0.
\]
Then
\[
\dot{S}(\tilde{t},S(\tau_{0}))=\Lambda+ \gamma_2 I(\tilde{t},I(\tau_{0}))+\rho \Omega R(\tilde{t},R(\tau_{0}))>0,
\]
but this is a contradiction since $S(\tilde{t}
,S(\tau_{0}))=0$, in consequence $S(t,S(\tau_{0}))>0$ for all $t\in J$. Similarly, it can be shown that $I(t,I(\tau_{0}))\geq 0$ y $R(t,R(\tau_{0}))\geq 0$ for each  $t\in J$. From this point forward, and without loss of generality, let $\tau_{0}=0$ and consider the solution \eqref{SIR-criminal-system} given by $\displaystyle{\Gamma(t)=(S(t,\textbf{X}_{0}),I(t,\textbf{X}_{0}),R(t,\textbf{X}_{0}))}$ with 
\[
S(0,\textbf{X}_{0})=S_{0} \ge 0, \quad I(0,\textbf{X}_{0})=I_{0} \ge 0, \quad \text{and} \quad R(t,\textbf{X}_{0})=R_{0} \ge 0.
\]
Let $N(t)=S(t,\textbf{X}_{0})+I(t,\textbf{X}_{0})+R(t,\textbf{X}_{0})$ with $N(0)=S_{0}+I_0+R_0=N_{0}$. A direct computation shows that
\[
\begin{split}
\dot{N}(t)&=\Lambda-\phi N(t)-\delta_1 I(t)-\delta_2 R(t),\\
\dot{N}(t)&\leq \Lambda-\phi N(t), \quad \forall t\in J.
\end{split}
\]
Let $M(t)$ be the solution of the initial value problem
\[
\dot{M}(t)=\Lambda-\phi M(t), \quad M(0)=N_{0},
\]
given by 
\[
M(t)=(N_0-\Lambda/\phi)e^{-\phi t}+\Lambda/\phi.
\]
It follows directly that $\Lambda/\phi$ is a global attractor in the set of non-negative initial conditions $N_0$, furthermore
\[
(\ast) \qquad\text{If} \quad N_0 \leq \Lambda/\phi, \quad \text{then} \quad  M(t) \leq \Lambda/\phi, \quad \forall t\geq 0.
\]
and according to the theory of differential inequalities $N(t)\leq M(t)$ for all $t\in J$ (see theo. 1.3 in \citep{teschl}). The preceding results imply the following
\begin{itemize}
    \item [$\triangleright$] $N(t)$ is indefinitely extendible forwards, i.e., $N(t)$ is well-defined for all $t\in [0,\infty)$ implying directly the same conclusion on $\Gamma (t)$.
    \item  [$\triangleright$] If $N_0 \leq \Lambda/\phi$ ($(\ast)$ holds), since $N(t)\leq M(t)$ for all $t\ge 0$, the set $\displaystyle{\mathcal{D}}$
is positive invariant under the flow $F(X)$ of \eqref{SIR-criminal-system}, i.e., for all solution $\Gamma(t)=(S(t,\textbf{X}_0),I(t,\textbf{X}_0),R(t,\textbf{X}_0))$ with $\Gamma(0)=\textbf{X}_0 \in \mathcal{D}$, satisfies $\Gamma(t) \in \mathcal{D}$ for all $t\in [0,\infty)$, 
\end{itemize}
This concludes the proof.
\end{proof}

\subsection{Existence and local stability equilibrium states} 
In this subsection, the existence and local stability of the potential equilibrium states of system \eqref{SIR-criminal-system} are analyzed. These states, under the influence of criminal activity, can become admissible equilibria for the system. This involves characterizing the solutions of $F(X)=\textbf{0}$ in \eqref{SIR-criminal-system}, i.e., the associated nonlinear system of equations
\begin{equation}\label{equilibrium equation}
\begin{split}
    \Lambda - h(S)I - \phi S +\gamma_2\,I\, + \rho \Omega R=0, \\
   h(S)I - (\phi + \delta_1)I -\gamma_1\,I - \gamma_2\,I + (1-\Omega)\rho R=0,\\
  \gamma_1\,I - (\phi + \delta_2 + \rho)R=0.\\
\end{split}
\end{equation}
To simplify calculations, we rename the following parameters
\begin{equation}\label{new parameters}
\sigma_1:=\phi+\delta_1, \quad \sigma_2:=\frac{\gamma_{1}\rho}{\phi+\delta_2+\rho} \quad \text{and} \quad \gamma:=\gamma_{1}+\gamma_2.
\end{equation}
The system \eqref{equilibrium equation} is now equivalent to the algebraic system
\begin{equation}\label{ecuacion de equilibrios}
\begin{split}
\Lambda - \phi S +(\gamma_2+\Omega\sigma_2\ -h(S))I=0&,\\
\big(h(S)-(\sigma_1+\gamma)+(1-\Omega)\sigma_2)I=0&,\\
\end{split} \qquad \text{and} \quad R=\frac{\sigma_2 I}{\rho}.
\end{equation}
Sufficient conditions for the existence of at most two admissible equilibria of the system \eqref{SIR-criminal-system} are provided by the following result 
\begin{lemma}\label{lema equilibrios}
For all set of parameters, the system \eqref{SIR-criminal-system} admits the crime-free equilibrium 
\[
E_0= (\Lambda/\phi, 0, 0).
\]
Moreover, under the assumptions
\[
\quad \quad A_1) \qquad (\dagger) \quad \sigma_1+\gamma_1-\sigma_2>0 \quad \text{and} \quad (\underline{\dagger}) \quad  \sigma_1+\gamma-(1-\Omega)\sigma_2< h(\Lambda/\phi),
\]
or
\[
A_2) \qquad (\ddagger) \qquad h(\Lambda/\phi)<\sigma_1+\gamma-(1-\Omega)\sigma_2<\min \left\{\frac{1}{\beta}, \gamma_2+\Omega \sigma_2\right\},
\]
the system \eqref{SIR-criminal-system} admits an crime-endemic equilibrium $E_1=(\hat{S},\hat{I}, \hat{R})$, where $\hat{S}$ is the unique solution of $\displaystyle{
h(\hat{S})=\sigma_1+\gamma-(1-\Omega)\sigma_2}$ and
\[
\hat{I}= \frac{\Lambda-\phi \hat{S}}{\sigma_1+\gamma_1-\sigma_2}, \quad \text{and} \quad \hat{R}=\frac{\sigma_2\hat{I}}{\rho}=\frac{\gamma_1 \hat{I}}{\phi+\delta_2+\rho}.
\]
\end{lemma}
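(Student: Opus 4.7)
The crime-free equilibrium $E_0=(\Lambda/\phi,0,0)$ is obtained by substituting $I=R=0$ into \eqref{ecuacion de equilibrios}: the second and third equations vanish identically, and the first reduces to $\Lambda-\phi S=0$. For the endemic equilibrium, the second equation in \eqref{ecuacion de equilibrios} forces either $I=0$ (recovering $E_0$) or the algebraic relation
\begin{equation*}
h(\hat S)=\sigma_1+\gamma-(1-\Omega)\sigma_2=:c.
\end{equation*}
So the strategy is to treat $c$ as a target value for $h$ and show, under either assumption $A_1$ or $A_2$, that there is a unique $\hat S>0$ realizing this value and that the corresponding $\hat I$, $\hat R$ are strictly positive.

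The first step is a monotonicity analysis of the Holling Type II response $h(S)=\alpha S/(1+\alpha\beta S)$. Since $h'(S)=\alpha/(1+\alpha\beta S)^2>0$, the map $h:[0,\infty)\to[0,1/\beta)$ is a strictly increasing bijection onto its image, with $h(0)=0$ and $\lim_{S\to\infty}h(S)=1/\beta$. Therefore the equation $h(\hat S)=c$ has a unique positive solution precisely when $0<c<1/\beta$. I would verify this range for $c$ in each case: in $A_2$ it is immediate from $(\ddagger)$ together with $h(\Lambda/\phi)>0$; in $A_1$, positivity of $c$ follows from $(\dagger)$ since $c=(\sigma_1+\gamma_1-\sigma_2)+\gamma_2+\Omega\sigma_2$, and $c<h(\Lambda/\phi)<1/\beta$ by $(\underline{\dagger})$.

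Next, substituting $h(\hat S)=c$ into the first equation of \eqref{ecuacion de equilibrios} and simplifying using $\gamma=\gamma_1+\gamma_2$ yields
\begin{equation*}
\Lambda-\phi\hat S=(\sigma_1+\gamma_1-\sigma_2)\hat I,
\end{equation*}
which gives the stated formula for $\hat I$. The third equation in \eqref{ecuacion de equilibrios} then produces $\hat R=\sigma_2\hat I/\rho$, and expanding $\sigma_2$ recovers $\hat R=\gamma_1\hat I/(\phi+\delta_2+\rho)$.

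The delicate step --- and the main obstacle --- is the positivity of $\hat I$, since the denominator $\sigma_1+\gamma_1-\sigma_2$ can have either sign. Here the two assumption sets play complementary roles. Under $A_1$, $(\dagger)$ gives a positive denominator, while the monotonicity of $h$ together with $(\underline{\dagger})$ forces $\hat S<\Lambda/\phi$, so the numerator $\Lambda-\phi\hat S$ is also positive. Under $A_2$, the inequality $c<\gamma_2+\Omega\sigma_2$ is equivalent (after substituting $\gamma=\gamma_1+\gamma_2$) to $\sigma_1+\gamma_1-\sigma_2<0$, making the denominator negative; simultaneously $h(\Lambda/\phi)<c$ together with the monotonicity of $h$ gives $\hat S>\Lambda/\phi$, so the numerator is also negative, and the quotient $\hat I$ remains positive. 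In both scenarios $\hat R>0$ follows immediately from $\sigma_2\geq 0$, completing the construction of $E_1$.
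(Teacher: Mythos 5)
Your proposal is correct and follows essentially the same route as the paper: read off $h(\hat S)=\sigma_1+\gamma-(1-\Omega)\sigma_2$ from the second equilibrium equation, use the strict monotonicity of the Holling response to locate a unique $\hat S$ on the correct side of $\Lambda/\phi$ under each assumption set, and match the signs of $\Lambda-\phi\hat S$ and $\sigma_1+\gamma_1-\sigma_2$ to get $\hat I>0$. Your explicit computation of $h'$ and of the range $[0,1/\beta)$ only makes precise what the paper uses implicitly.
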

\begin{proof}
The admissible equilibrium points of the \eqref{SIR-criminal-system} are the solutions of \eqref{ecuacion de equilibrios}  in $\mathbb{R}^3_{+}$. It is clear that  $\textbf{E}_0=(\Lambda/\phi,0,0)$ satisfies  \eqref{ecuacion de equilibrios} for any set of parameters; that is, the crime-free equilibrium always exists, independent of all model parameters.
The existence of a crime-endemic equilibria reduces to the existence of a pair of positive numbers $\hat{S}$ y $\hat{I}$ such that
\[
\hat{I}=\frac{\Lambda-\phi \hat{S}}{h(\hat{S})-(\gamma_2+\Omega \sigma_2)}, \quad \text{with} \quad h(\hat{S})=\sigma_{1}+\gamma-(1-\Omega)\sigma_2.
\]
Under condition $(\dagger)$, it follows
\[
\sigma_1+\gamma-(1-\Omega)\sigma_2-(\gamma_2+\Omega \sigma_2)=\sigma_{1}+\gamma_1-\sigma_2>0 \quad \Rightarrow \quad \sigma_1+\gamma-(1-\Omega)\sigma_2>0.
\]
From the condition $(\underline{\dagger})$ the equation
\[
(\ast) \qquad h(S)=\sigma_1+\gamma-(1-\Omega)\sigma_{2},
\]
have only one solution $\hat{S} \in (0,\Lambda/\phi)$. Therefore, for $I=\hat{I}$, with
\[
(\ast \ast) \qquad \hat{I}=\frac{\Lambda-\phi \hat{S}}{h(\hat{S})-(\gamma_2+ \Omega \sigma_2)}=\frac{\Lambda-\phi \hat{S}}{\sigma_1+\gamma_1-\sigma_2}> 0.
\]
This proves the existence of a crime-endemic equilibria for \eqref{SIR-criminal} under assumptions $A_1)$. Now, suppose that condition $A_2)$, then
\[
0<h(\Lambda/\phi)<\sigma_1+\gamma-(1-\Omega)\sigma_2<1/\beta,
\]
thus, the equation $(\ast)$ admits a unique solution $\hat{S} \in (\Lambda/\phi,\infty)$. In addition, it is also true that
\[
\sigma_1+\gamma-(1-\Omega)\sigma_2-(\gamma_2+\Omega \sigma_2)=\sigma_{1}+\gamma_1-\sigma_2<0.
\]
and as a result, the value $\hat{I}$ given by $(\ast \ast)$ is positive, meaning
\[
\hat{I}=\frac{\Lambda-\phi \hat{S}}{\sigma_1+\gamma_1-\sigma_2}> 0,  \quad \text{with} \quad \hat{S}>\Lambda/\phi,
\]
which concludes the demonstration.
\end{proof}

\subsubsection{Local stability} Having established the existence of equilibrium points $\textbf{E}_{0}$ and $\textbf{E}_{1}$ for system \eqref{SIR-criminal-system}, our subsequent objective is to analyze their local stability properties. To achieve this, we will employ the linearized method, which involves studying the stability properties of the trivial solution $Z(t)=0$, of the associated linear system $\displaystyle{\dot{Z}(t)=A_{\ast} Z(t)}$ in each equilibria $\textbf{E}_{\ast}$ where
\[
 A_{\ast}=\begin{pmatrix}
    -h^{\prime}(S_{\ast})I_{\ast}-\phi & -h(S_{_{\ast}})+\gamma_2 & \rho \Omega\\
    h^{\prime}(S_{\ast})I_{\ast} & h(S_{\ast})-(\phi+\delta_1)-(\gamma_1+\gamma_2) & (1-\Omega)\rho\\
    0 & \gamma_1 & -(\phi+\delta_2+\rho)
\end{pmatrix}.
\]
\begin{prop}\label{crime-free equilibria stability}(Local stability of the crime-free equilibrium $\textbf{E}_{0}$)  For the system \eqref{SIR-criminal-system} and the parameters \eqref{new parameters}, the equilibrium $\textbf{E}_0$ satisfies
\begin{itemize}
    \item[$\triangleright$] $\textbf{E}_0$ is unstable if the condition $(\underline{\dagger})$ of Lema \ref{lema equilibrios} holds, i.e., 
\[
h(\Lambda/\phi)>\sigma_1+\gamma-(1-\Omega)\sigma_2,
\]
\item[$\triangleright$]$\textbf{E}_{0}$ is locally  stable if
\[
h(\Lambda/\phi)\leq \sigma_1+\gamma-(1-\Omega)\sigma_2.
\]
Moreover, if the inequality is strict, then $\textbf{E}_{0}$ is locally asymptotically stable.
\end{itemize}
\end{prop}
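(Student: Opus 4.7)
The plan is to apply the linearization method already set up in the proposition, by explicitly computing the Jacobian $A_0$ at $\textbf{E}_0=(\Lambda/\phi,0,0)$ and exploiting its block triangular structure. At this equilibrium $I_\ast=0$, which kills every term proportional to $h'(S_\ast)I_\ast$, so the first column of $A_0$ contains $-\phi$ in the top position and zeros below. Consequently $-\phi<0$ is an eigenvalue, and the remaining spectrum is given by the $2\times 2$ block
\[
B=\begin{pmatrix}
h(\Lambda/\phi)-\sigma_1-\gamma & (1-\Omega)\rho\\
\gamma_1 & -(\phi+\delta_2+\rho)
\end{pmatrix}.
\]
Thus the stability analysis reduces to applying the Routh--Hurwitz criterion for a $2\times 2$ matrix, namely $\mathrm{tr}(B)<0$ and $\det(B)>0$.

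Next I would compute these two quantities in terms of the condensed parameters \eqref{new parameters}. Using $(\phi+\delta_2+\rho)\sigma_2=\gamma_1\rho$ one obtains
\[
\det(B)=(\phi+\delta_2+\rho)\bigl[\sigma_1+\gamma-(1-\Omega)\sigma_2-h(\Lambda/\phi)\bigr],
\]
so the sign of $\det(B)$ is exactly the sign of $\sigma_1+\gamma-(1-\Omega)\sigma_2-h(\Lambda/\phi)$. For the trace, assuming the stability hypothesis $h(\Lambda/\phi)\le \sigma_1+\gamma-(1-\Omega)\sigma_2$, a one-line estimate gives
\[
\mathrm{tr}(B)=h(\Lambda/\phi)-\sigma_1-\gamma-(\phi+\delta_2+\rho)\le -(1-\Omega)\sigma_2-(\phi+\delta_2+\rho)<0,
\]
since $\Omega\in[0,1]$ and all rate parameters are positive. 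Therefore, under the strict inequality, both Routh--Hurwitz conditions hold strictly, every eigenvalue of $A_0$ has negative real part, and local asymptotic stability of $\textbf{E}_0$ follows from the Hartman--Grobman theorem.

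For the instability half, under assumption $(\underline{\dagger})$ of Lemma \ref{lema equilibrios} we have $\det(B)<0$, so the two eigenvalues of $B$ are real with opposite signs, providing an eigenvalue of $A_0$ with strictly positive real part. Invoking the unstable manifold theorem then yields instability of $\textbf{E}_0$.

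The main obstacle is the borderline case $h(\Lambda/\phi)=\sigma_1+\gamma-(1-\Omega)\sigma_2$: here $\det(B)=0$, so $A_0$ has a simple zero eigenvalue while the other two eigenvalues have negative real parts, and Hartman--Grobman no longer applies. To justify the claim of (non-asymptotic) local stability in this degenerate case one must pass to a center manifold analysis or exhibit a Lyapunov function on the center direction; the natural candidate is the conserved combination $(1-\Omega)\rho\,\delta I+(h(\Lambda/\phi)-\sigma_1-\gamma)\delta R$ read off from the kernel of $B$, together with the invariant region $\mathcal D$ from Theorem \ref{invarianza} to bound trajectories. I would flag this as the delicate step and handle the generic (strict) case first, which is what the proposition really uses.
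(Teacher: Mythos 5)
Your proof is correct and follows essentially the same route as the paper: linearize at $\textbf{E}_0$, peel off the eigenvalue $-\phi$ from the block-triangular structure, and decide the signs of the two remaining eigenvalues of the $2\times 2$ block --- the paper does this by writing those eigenvalues out explicitly and checking the sign of their product $-\Upsilon$, which is exactly your $\det(B)$, whereas you use the equivalent trace/determinant (Routh--Hurwitz) test. Your flag on the borderline case $h(\Lambda/\phi)=\sigma_1+\gamma-(1-\Omega)\sigma_2$ is well taken: there the linearization has a zero eigenvalue and the paper's own proof likewise concludes ``locally stable'' from $\mathrm{Re}(\lambda_{1,2})\le 0$ alone, which is not justified without the center-manifold or Lyapunov-type argument you sketch.
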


\begin{proof}
For $\textbf{E}_0=(\Lambda/\phi,0,0)$, the associated linear system $\displaystyle{\dot{Z}(t)=A_{0} Z(t)}$ has the matrix $A_{0}$ (rewritten with the parameters $\sigma_1,\sigma_2$ y  $\gamma$) given by 
\[
A_{0}=\begin{pmatrix}
-\phi & -h(\Lambda/\phi) + \gamma_2 & \Omega \rho \\
0 & h(\Lambda/\phi)- (\sigma_1 + \gamma)& (1-\Omega)\rho\\ 0 & \gamma_1 & - \gamma_1\sigma^{-1}_{2}\rho
\end{pmatrix}.
\]
A direct computation show that the eigenvalues $\lambda_i$, $i=0,1,2$ of $A_{0}$ are
\[
\begin{split}
\lambda_0&=-\phi, \\
\lambda_{1,2}&=\frac{\nu-\gamma_1\sigma^{-1}_2\rho\pm \sqrt{\big(\nu-\gamma_1\sigma^{-1}_2\rho\big)^2+4\big(\nu\gamma_1\sigma^{-1}_2\rho+(1-\Omega)\gamma_1\rho\big)}}{2}, 
\end{split}
\]
with $\displaystyle{\nu=h(\Lambda/\phi)-(\sigma_1+\gamma)}$. Let 
\[
\begin{split}
\Upsilon&=\nu \gamma_1\sigma^{-1}_{2}\rho+(1-\Omega)\gamma_1\rho,\\
&=\nu(\phi+\delta_2+\rho)+(1-\Omega)\gamma_1\rho.
\end{split}
\]
It is clear that
\[
I) \quad \Upsilon>0  \quad \Leftrightarrow \quad h(\Lambda/\phi)-(\sigma_1+\gamma)>-(1-\Omega)\sigma_2,
\]
which is precisely the condition $(\underline{\dagger})$ in Lemma \ref{lema equilibrios}. From here, it follows that $\lambda_{1}\lambda_2<0$, proving that $\textbf{E}_{0}$ is unstable.

Now assume that 
\[
II) \quad \Upsilon\leq 0 \quad \Leftrightarrow \quad \nu<-(1-\Omega)\sigma_{2} \quad \Leftrightarrow \quad h(\Lambda/\phi)-(\sigma_1+\gamma)\leq-(1-\Omega)\sigma_2.
\]
Then, the real part of $\lambda_{1,2}$, denoted by  $\displaystyle{\text{Re}(\lambda_{1,2})}$, satisfy  $\displaystyle{\text{Re}(\lambda_{1,2})\leq 0}$. In consequence, $\textbf{E}_{0}$ is locally stable. Further  
\[
\text{If} \quad \nu <\gamma_{1}\sigma^{-1}_{2}\rho, \quad \text{then} \quad  \text{Re}(\lambda_{1,2})<0,
\]
And therefore $\textbf{E}_{0}$ is locally asymptotically stable.
\end{proof}

\begin{remark}
If $\rho=0$ occurs simultaneously that $\textbf{E}_{1}$ exists, $\textbf{E}_{0}$ is unstable and $\hat{S}<\Lambda/\phi.$ 
\end{remark}

Concerning the local stability of $\textbf{E}_1$, the corresponding matriz $A_{1}$ is given by
\[
 A_{1}=\begin{pmatrix}
    -h^{\prime}(\hat{S})\hat{I}-\phi & -h(\hat{S})+\gamma_2 & \rho \Omega\\
    h^{\prime}(\hat{S})\hat{I}& -(1-\Omega)\sigma_{2}& (1-\Omega) \rho\\
    0 & \gamma_1 & -\gamma_1\sigma^{-1}_{2}\rho
\end{pmatrix},
\]
meanwhile, the characteristic polynomial $\displaystyle{p(\lambda):=\det(A_{1}-\lambda I_2)}$ is 
\begin{equation}\label{polinomio caracteristico de E1}
\begin{split}
p(\lambda)&=-\lambda^{3}+\tau_2\lambda^{2}-\tau_{1}\lambda+d,\\
&=-(\lambda-\tilde{\lambda}_1)(\lambda-\tilde{\lambda}_2)(\lambda-\tilde{\lambda}_3),
\end{split}
\end{equation}
where
\[
\begin{split}
\tau_{2}&=\sum_{j=1}^{3}\tilde{\lambda}_{j}=-\big(h^{\prime}(\hat{S})\hat{I}+(1-\Omega)\sigma_{2}+2\phi+\delta_2+\rho\big)<0,\\
\tau_{1}&=\sum_{i,j=1, i\neq j}^{3}\tilde{\lambda}_{j}=\phi\big(\gamma_{1}\sigma^{-1}_{2}\rho+(1-\Omega)\sigma_2,\big)+h^{\prime}(\hat{S})\hat{I}\big(\gamma_{1}\sigma^{-1}_{2}\rho+\sigma_{1}+\gamma_{1}\big)>0,\\
d &=\prod_{j=1}^{3}\tilde{\lambda}_{j}=\big(\gamma_{1}\rho-(\sigma_1+\gamma_1)\gamma_1\sigma^{-1}_{2}\rho\big) h^{\prime}(\hat{S})\hat{I},\\
&=\big(\sigma_2+\gamma_2-(\sigma_1+\gamma)\big)\hat{I} (\phi+\delta_2+\rho) h^{\prime}(\hat{S})=(\phi \hat{S}-\Lambda)(\phi+\delta_2+\rho)h^{\prime}(\hat{S}).
\end{split}
\]

\begin{prop}\label{endemic-crime equilibria stability}(Local stability of the endemic-crime equilibrium $\textbf{E}_{1}$) For the system \eqref{SIR-criminal-system} and the parameters \eqref{new parameters}, the equilibrium $\textbf{E}_1$ satisfies
\begin{itemize}
    \item [$\triangleright$] If condition  $A_1)$ in  Lema \ref{lema equilibrios} holds, then $\textbf{E}_1$ is locally asymtotically stable if and only if $\displaystyle{\tau_{2}\tau_1-d<0}.$
    \item [$\triangleright$] If condition  $A_2)$ in  Lema \ref{lema equilibrios} holds, then $\textbf{E}_1$ is unstable.
\end{itemize}

\end{prop}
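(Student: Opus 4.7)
The plan is to apply the Routh--Hurwitz criterion directly to the characteristic polynomial $p(\lambda)=-\lambda^3+\tau_2\lambda^2-\tau_1\lambda+d$ of $A_1$. After multiplying by $-1$ to obtain the monic form $\lambda^3-\tau_2\lambda^2+\tau_1\lambda-d$, the standard Routh--Hurwitz conditions for all three roots to have strictly negative real parts become: (i) $\tau_2<0$, (ii) $\tau_1>0$, (iii) $d<0$, and (iv) the Hurwitz determinant inequality $\tau_2\tau_1-d<0$. All of (i)--(iv) together are necessary and sufficient for local asymptotic stability of $\textbf{E}_1$.

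Next, I would read off the signs of $\tau_2$ and $\tau_1$ from the closed-form expressions displayed just before the statement. Since $h'(\hat S)=\alpha/(1+\alpha\beta\hat S)^2>0$, $\hat I>0$, $0\le \Omega\le 1$, and all other model parameters are nonnegative, the identity $\gamma_1\sigma_2^{-1}\rho=\phi+\delta_2+\rho$ makes the two formulas manifestly sign-definite: $\tau_2<0$ and $\tau_1>0$ unconditionally, for any admissible endemic equilibrium in either regime $A_1$ or $A_2$. Therefore (i) and (ii) are automatic and play no role in separating the two cases. The decisive quantity is the sign of $d=(\phi\hat S-\Lambda)(\phi+\delta_2+\rho)h'(\hat S)$, which by Lemma~\ref{lema equilibrios} is negative under $A_1$ (since $\hat S\in(0,\Lambda/\phi)$) and positive under $A_2$ (since $\hat S>\Lambda/\phi$).

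Under $A_1$, conditions (i), (ii), (iii) all hold, so local asymptotic stability reduces exactly to condition (iv), namely $\tau_2\tau_1-d<0$, proving the first bullet. Under $A_2$, condition (iii) fails: the product of the three eigenvalues equals $d>0$. But if all three eigenvalues had strictly negative real parts---either three real negatives, or one real negative together with a complex-conjugate pair $-a\pm b i$ with $a>0$, whose contribution $a^2+b^2>0$ multiplied by a negative real is negative---the product would be strictly negative. This contradiction forces at least one eigenvalue to have nonnegative real part, so $\textbf{E}_1$ is unstable.

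The only technical subtlety I anticipate is the careful sign bookkeeping caused by the nonstandard leading coefficient $-1$ in $p(\lambda)$; no deeper analytical obstacle arises, because the signs of $\tau_2$, $\tau_1$ and of the factor $\phi\hat S-\Lambda$ entering $d$ are all immediately extractable from the explicit formulas and from Lemma~\ref{lema equilibrios}.
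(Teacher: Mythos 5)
Your argument is correct and takes essentially the same route as the paper: Routh--Hurwitz applied to the monic form of $p(\lambda)$ under $A_1$, with the sign of $d$ read off from the position of $\hat S$ relative to $\Lambda/\phi$ given by Lemma \ref{lema equilibrios}, and instability under $A_2$ driven by $d>0$. One small tightening for the $A_2$ case: ``at least one eigenvalue has nonnegative real part'' is not by itself enough for instability, but your own product argument yields more --- since a complex-conjugate pair contributes the positive factor $a^2+b^2$, the real eigenvalue (or one of three real eigenvalues) must be \emph{strictly positive} when $d>0$, which is exactly what the paper extracts via $p(0)=d>0$ together with $p(\lambda)\to-\infty$.
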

\begin{proof}
Assume that condition $A_1)$ holds.
By hypothesis, we have
\[
h(\Lambda/\phi)>h(\hat{S}) \quad \Leftrightarrow \quad \Lambda/\phi >\hat{S} \quad \Leftrightarrow \quad d<0.
\]
As a consequence of \eqref{polinomio caracteristico de E1}, we deduce
\[
-p(\lambda)=\lambda^3-\tau_2 \lambda^2+\tau_1\lambda -d, \quad \text{with} \quad -\tau_2>0, \tau_{1}>0 \quad \text{and} \quad -d>0.
\]
Applying the Routh-Hurwitz criteria to $-p(\lambda)$, all their roots are negative or will have a negative real part if and only if 
\[-\tau_2 \tau_1>-d, \quad \Leftrightarrow \quad \tau_2 \tau_1-d<0,
\]
and from here, the conclusion follows directly.

If condition $A_2)$ holds, we have 
\[
h(\Lambda/\phi)<h(\hat{S}) \quad \Leftrightarrow \quad \Lambda/\phi <\hat{S} \quad \Leftrightarrow \quad d>0.
\]
In consequence, from \eqref{polinomio caracteristico de E1} it follows
\[
p(0)=d>0 \quad \text{and} \quad \lim_{\lambda \to \infty}p(\lambda)=-\infty.
\]
Therefore, there exists $\tilde{\lambda}\in \mathbb{R}^{+}$ such that $p(\tilde{\lambda})=0$, which means that $A_1$ admits a strictly positive real eigenvalue showing the instability of $\textbf{E}_{1}$.
\end{proof}

\subsubsection*{\textbf{When does a persistent crime scenario emerge?}}
It is interesting to observe how the result of the instability of the crime-free equilibrium, given in Proposition \ref{endemic-crime equilibria stability}, tells us how the criminal infection (Crime) spreads through the population. Indeed, if we write system \eqref{SIR-criminal} in the form
\begin{equation} \label{SIR-criminal modificado}
\left\{
\begin{aligned}
\dot{I} &=h(S)I -\big((\phi+\delta_1)I+(\gamma_{1}+\gamma_2)I-(1-\Omega)\rho R\big),\\
\dot{R} &=-\big((\phi + \delta_2 + \rho)R-\gamma_1 I\big),\\
    \dot{S} &=  -\big(h(S)I + \phi S -\Lambda -\gamma_2\,I\, - \rho \Omega R\big),    
\end{aligned}
\right.
\end{equation}
and define the functions;
\[
\begin{split}
\mathcal{F}_{1}(I,R,S)&=h(S)I, \qquad \mathcal{V}_{1}(I,R,S)=(\phi+\delta_1)I+(\gamma_1+\gamma_2)I-(1-\Omega)\rho R,\\
\mathcal{F}_{2}(I,R,S)&=0, \hspace{1.6 cm} \mathcal{V}_{2}(I,R,S)=(\phi + \delta_2 + \rho)R-\gamma_1 I,\\
\mathcal{F}_{3}(I,R,S)&=0, \hspace{1.6 cm} \mathcal{V}_{3}(I,R,S)=f(S)I + \phi S -\Lambda -\gamma_2\,I\, - \rho \Omega R,\\
\end{split}
\]
Direct calculations show that \eqref{SIR-criminal modificado} verifies conditions a, b, and c. presented in the Appendix for system \eqref{SIR-modificado}. Furthermore, with $x_1=I$, $x_2=R$ y $x_{3}=S$ then the matrices are
\[
F=\left[\frac{\partial \mathcal{F}_{i}}{\partial x_{j}}\right]\bigg|_{(0,0,\Lambda/\phi)} \quad \text{and} \quad V=\left[\frac{\partial \mathcal{V}_{i}}{\partial x_{j}}\right]\bigg|_{(0,0,\Lambda/\phi)},
\]
are given by
\[
F=\begin{pmatrix}
h(\Lambda/\phi) & 0 & 0\\
0 & 0 & 0\\
0 & 0 & 0
\end{pmatrix} \quad \text{and} \quad V=\begin{pmatrix}
\phi+\delta_1 +(\gamma_1+\gamma_2) & -(1-\Omega)\rho & 0\\
-\gamma_1 & (\phi+\delta_2+\rho) & 0\\
h(\Lambda/\phi)-\gamma_2& -\Omega\rho & \phi
\end{pmatrix},
\]
respectively. Therefore,
\[
FV^{-1}=\begin{pmatrix}
\frac{(\phi+\delta_1+\rho)h(\Lambda/\phi)}{(\phi+\delta_1+(\gamma_1+\gamma_2))(\phi+\delta_2+\rho)-((1-\Omega)\rho\gamma_1)} & \frac{(1-\Omega)\rho \,h(\Lambda/\phi)}{(\phi+\delta_1+(\gamma_1+\gamma_2))(\phi+\delta_2+\rho)-((1-\Omega)\rho\gamma_1)} & 0 \\
0 & 0 &0 \\
0 & 0 &0 
\end{pmatrix}.
\]
The set of eigenvalues of $FV^{-1}$ is given by
\[
\left\{0,0, \frac{(\phi+\delta_1+\rho)h(\Lambda/\phi)}{(\phi+\delta_1+(\gamma_1+\gamma_2))(\phi+\delta_2+\rho)-((1-\Omega)\rho\gamma_1)}\right\}.
\]
Define the parameter
\[
\mathscr{R}_{0}=\frac{(\phi+\delta_1+\rho)h(\Lambda/\phi)}{\big|(\phi+\delta_1+(\gamma_1+\gamma_2))(\phi+\delta_2+\rho)-((1-\Omega)\rho\gamma_1)\big|}.
\]
Notice that
\[
\mathscr{R}_{0}>1 \quad \Leftrightarrow \quad (\phi+\delta_2+\rho)(h(\Lambda/\phi)-(\phi+\delta_1+\gamma))>-(1-\Omega)\rho \gamma_1, 
\]
a condition that, rewritten with the parameters \eqref{new parameters}, becomes
\[
\mathscr{R}_{0}>1 \quad \Leftrightarrow \quad h(\Lambda/\phi)>\sigma_{1}+\gamma-(1-\Omega)\sigma_{2},
\]
which is precisely the instability condition of the equilibrium $E_{0}$ (in this context written as $E_{0}=(0,0,\Lambda/\phi)$) given by Proposition \ref{crime-free equilibria stability}. Thus, the threshold value $\mathscr{R}_{0}$ helps us measure the probability that a young person with criminal behavior victimizes a susceptible young person, with $\mathscr{R}_{0}>1$ being the scenario to be avoided. 

This also indicates that the main objective of government entities and other institutions of law enforcement, security, and social welfare is to maintain control over all parameters to ensure $\mathscr{R}_{0}\leq 1$. In this case, the qualitative behavior of the system will admit only one equilibrium: specifically, the equilibrium $\textbf{E}_{0}$ and the activity of the group of young criminals will always be under control.

\section{A cost-effective and optimal population control strategy}\label{sec-5}
Based on the previous Crime/SIR model, a series of strategies are proposed to minimize crime in the city (through the control or reduction population at group $I$), using an optimal combination of controls according to the available resources and the specific socioeconomic conditions of the city. The proposed controls initially include educational programs in communities with a high incidence of crime, focusing on the prevention of criminal behavior, life skills, and employment opportunities, as for example, is currently done through the \emph{En la Buena!} government program.

In addition, surveillance strategies are also considered, which focus on areas with high rates of violence and crime, using data analysis to identify areas and times of risk. Finally, the development of co-existence spaces and community collaboration is promoted to strengthen social cohesion in neighborhoods affected by the presence of gangs and that face high levels of violence that impact community well-being, based on the recognition that recovery of the social fabric is fundamental for a sustainable and effective intervention.

Based on the above, three types of controls are proposed:
\begin{itemize}
    \item[$\triangleright$] \textbf{Preventive ($u_1$):} Directed at young people who have not yet participated in criminal activities (susceptibles), but who live in environments of high vulnerability and exposure to the dynamics of violence and crime.
    \item[$\triangleright$] \textbf{Punitive/Surveillance ($u_2$):} Implemented through effective (efficient + successful) interventions of pursuit, prosecution, conviction, and incarceration of young people who have already participated in criminal activities (infected).
    \item[$\triangleright$] \textbf{Social Reintegration ($u_3$):} This type of control is directed at young people who have committed a crime and are immersed in dynamics of violence (infected), to persuade them to change their lifestyle.
\end{itemize}

Through these controls, the goal is to reduce the incidence of young people in the city joining gangs and criminal activities, either by preventing their insertion (with the policy $u_1$), effectively prosecuting wrongful acts (with the policy $u_2$), or persuading them to desist from belonging to such groups (with the policy $u_3$). Following this, the first differential equation modeling scheme is presented, which represents the interactions between individuals and how they transition from one group to another

\begin{equation} \label{Sistema de ecuaciones}
\left\{
\begin{aligned}
    \dot{S} &= \Lambda - (1-u_1) h(S)I - \phi S + (1+u_3)\gamma_2\,I\, + \rho \Omega R, \\
    \dot{I} &= (1-u_1)h(S)I - (\phi + \delta_1)I - (1+u_2)\gamma_1\,I - (1+u_3)\gamma_2\,I + (1-\Omega)\rho R,\\
    \dot{R} &= (1+u_2)\gamma_1\,I - (\phi + \delta_2 + \rho)R.
\end{aligned}
\right.
\end{equation}


The following section is dedicated to identifying the combination of policies that generates the greatest possible impact in terms of reducing the number of young people involved in criminal or violent activities, while simultaneously increasing the effort to ensure adequate rehabilitation. This is done without losing sight of the goal of minimizing the cost of these interventions; that is, ensuring that their implementation is cost-effective.

Based on the foregoing and denoting by $\mathscr{C}_i > 0$ the cost associated with control $u_i$, con $i = 1,2,3$  (the control policies implemented by the local government), the following optimal control problem is considered
\begin{equation}\label{problema de control óptimo}
\begin{split}
&\min_{\mathcal{U}} \mathscr{F}[u_1, u_2, u_3]= \int_{0}^{T_{\text{F}}} \left(I(t) -
R(t) + \frac{\mathscr{C}_1}{2} u_1^2(t) + \frac{\mathscr{C}_2}{2} u_2^2(t) + \frac{\mathscr{C}_3}{2} u_3^2(t) \right) \, dt,\\
&\text{s.t.}\\
    &\dot{S} = \Lambda - (1-u_1)h(S)I - \phi S + (1+u_3)\gamma_2I + \rho \Omega R, \\
    &\dot{I} = (1-u_1)h(S)I - (\phi + \delta_1)I - (1+u_2)\gamma_1 I - (1+u_3)\gamma_2I + (1-\Omega)\rho I,\\
    &\dot{R} = (1+u_2)\gamma_1 I - (\phi + \delta_2 + \rho)R,\\
\end{split}
\end{equation}
with initial conditions $S(0)=S_0$, $I(0)=I_0$ y $R(0)=R_0$, satisfying
\[
S_0,I_0, R_0 \in \mathbb{R}_{+} \quad \text{and} \quad  S_0+I_0+R_0\leq \Lambda/\phi,
\]
being $\mathcal{U}$ the set of control variables (Lebesgue-integrable functions) given by 
\[
\mathcal{U}:\left\{(u_1,u_2,u_3):u_i\in L^{1}(0,T_{F}), 0 < u_{i,m}\leq u_{i}(t)\leq u_{i,M} \right\}.
\]
with where $T_{F}<\infty$ is the final implementation time of the control policies. Following the proof of Theorem \ref{invarianza}, it is not difficult to prove that the set $\mathcal{D}\in \mathbb{R}^{3}_{+}$ given by 
\[
\mathcal{D}=\left\{(S,I,R) \in \mathbb{R}^3_{+}:S+I+R\leq \Lambda/\phi
\right\},
\]
is positively invariant, for the system in \eqref{problema de control óptimo} for all $\textbf{u} \in \mathcal{U}$. Let $\textbf{x}=\textbf{x}(t)$ and $\textbf{u}=\textbf{u}(t)$ be the state and control variables of \eqref{problema de control óptimo}, respectively, i.e., 
\[
\textbf{x}(t)=\begin{pmatrix}
    S(t)\\
    I(t)\\
    R(t)
\end{pmatrix} \quad \text{and} \quad \textbf{u}(t)=\begin{pmatrix}
    u_1(t)\\
    u_2(t)\\
    u_3(t)
\end{pmatrix},  \quad t\in [0,T_F]
\]
 From Chapter III in \citep{Fleming}, an existence result for solutions to the optimal control problem \eqref{problema de control óptimo} is derived. Firstly, for a given $\textbf{u}=\textbf{u}(t) \in \mathcal{U}$, let $\textbf{x}(t,\textbf{x}_{0})$  be a solution of the system of differential equations at \eqref{problema de control óptimo} with $\textbf{x}_{0}\in \mathcal{D}$ and consider the set
\[
\mathscr{D}=\left\{(\textbf{x}_{0},\textbf{u}): u\in L^{1}(0,T_F), \, u(t)\in \mathcal{U} \quad \text{and} \quad (0,t,\textbf{x}_{0},\textbf{x}(t,\textbf{x}_0)) \in \mathcal{D}^{\prime}_{t}=[0,T_f]\times \mathcal{D}\times \mathcal{D}\right\}.
\]
which is not empty and compact for each $t\in [0,T_F]$.

\begin{theorem} (Existence of optimal trayectories)
For the optimal control problem \eqref{problema de control óptimo}, there exists a combination of optimal trayectories 
\[
\textbf{x}_{\ast}(t,\textbf{x}_{\ast})=\begin{pmatrix}
S(t,S_{\ast})\\
I(t,I_{\ast})\\
R(t,R_{\ast})
\end{pmatrix} \quad \text{and} \quad \textbf{u}_{\ast}(t)=\begin{pmatrix}
u^{\ast}_{1}(t)\\
u^{\ast}_{2}(t)\\
u^{\ast}_{3}(t)
\end{pmatrix},
\]
that minimice the functional $\mathscr{F}(u_1, u_2, u_3)$ over the set $\displaystyle{\mathscr{D}}$. Moreover, $\textbf{x}_{\ast}(t,\textbf{x}_{\ast})$ is a $C^{1}$ function and $\textbf{u}_{\ast}(t)$  a continuos function.
\end{theorem}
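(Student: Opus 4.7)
The plan is to invoke the standard Filippov--Cesari / Fleming--Rishel existence theorem for Bolza problems (Chapter III of \citep{Fleming}), which reduces the proof to checking a short list of structural conditions on the dynamics, the running cost, and the admissible set. First I would observe that the admissible set $\mathcal{U}$ is non-empty: any constant selection $u_i(t)\equiv u_{i,m}$ lies in $\mathcal{U}$, and for such a control the right-hand side of the system in \eqref{problema de control óptimo} is locally Lipschitz in $\mathbf{x}$, so by Picard--Lindel\"of there is a unique solution on a maximal interval, which (mimicking the argument of Theorem \ref{invarianza} with the control terms absorbed into the drift) stays in the positively invariant compact region $\mathcal{D}$ for all $t\in[0,T_F]$. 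Hence $\mathscr{D}$ is non-empty.

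Next I would verify the structural hypotheses. The state space $\mathcal{D}$ is compact, so all admissible trajectories are uniformly bounded. The vector field $f(t,\mathbf{x},\mathbf{u})$ defining the ODE is affine in $\mathbf{u}$ (each control enters multiplicatively against a bounded function of $\mathbf{x}$), and the set $\mathcal{U}$ is closed, convex and bounded in $\mathbb{R}^3$. Therefore the ``velocity set''
\[
\mathcal{Q}(t,\mathbf{x})=\bigl\{(f(t,\mathbf{x},\mathbf{u}),\,L(t,\mathbf{x},\mathbf{u})+\eta):\mathbf{u}\in\mathcal{U},\;\eta\geq 0\bigr\},
\]
with $L(t,\mathbf{x},\mathbf{u})=I-R+\tfrac{\mathscr{C}_1}{2}u_1^2+\tfrac{\mathscr{C}_2}{2}u_2^2+\tfrac{\mathscr{C}_3}{2}u_3^2$, is convex in $\mathbf{u}$ because $f$ is affine in $\mathbf{u}$ and $L$ is strictly convex quadratic in $\mathbf{u}$. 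Moreover $L$ is bounded below on $\mathcal{D}\times\mathcal{U}$, so $\mathscr{F}$ is bounded below and admits a minimising sequence $(\mathbf{u}_n,\mathbf{x}_n)$. By Banach--Alaoglu the sequence $\mathbf{u}_n$ is weakly relatively compact in $L^2(0,T_F;\mathbb{R}^3)$; passing to a subsequence $\mathbf{u}_n\rightharpoonup \mathbf{u}_\ast\in\mathcal{U}$. The corresponding $\mathbf{x}_n$ are equicontinuous (since $f$ is uniformly bounded on $\mathcal{D}\times\mathcal{U}$) and uniformly bounded, so by Arzel\`a--Ascoli $\mathbf{x}_n\to\mathbf{x}_\ast$ uniformly up to a subsequence. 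The affine dependence on $\mathbf{u}$ lets one pass to the limit in the state equation, so $(\mathbf{x}_\ast,\mathbf{u}_\ast)\in\mathscr{D}$. Finally, weak lower semicontinuity of $\mathbf{u}\mapsto\int_0^{T_F}\tfrac{\mathscr{C}_i}{2}u_i^2$ together with uniform convergence of $\mathbf{x}_n$ give $\mathscr{F}[\mathbf{u}_\ast]\leq\liminf \mathscr{F}[\mathbf{u}_n]=\inf_{\mathscr{D}}\mathscr{F}$, so $(\mathbf{x}_\ast,\mathbf{u}_\ast)$ is optimal.

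For the regularity statements I would appeal to Pontryagin's Maximum Principle applied to the optimal pair: introducing adjoint variables $\boldsymbol{\lambda}=(\lambda_S,\lambda_I,\lambda_R)$ satisfying a linear terminal-value ODE with continuous coefficients (so $\boldsymbol{\lambda}\in C^1$), the pointwise minimisation of the Hamiltonian in $\mathbf{u}$ is an unconstrained quadratic in each $u_i$, hence can be solved explicitly as an affine function of $(\mathbf{x}_\ast,\boldsymbol{\lambda})$ and then projected onto $[u_{i,m},u_{i,M}]$. The projection of a continuous function onto a fixed interval is continuous, giving continuity of $\mathbf{u}_\ast$; feeding this continuous control back into the state equation, whose right-hand side is $C^1$ in $\mathbf{x}$, yields $\mathbf{x}_\ast\in C^1([0,T_F];\mathbb{R}^3)$.

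The main obstacle I anticipate is not any single step but the verification that the ``velocity set'' convexity hypothesis of the Fleming--Rishel theorem holds in a form suitable for our system: one must confirm that because the controls enter affinely and the Lagrangian is convex in $\mathbf{u}$, the epigraph-augmented image set is convex, which is exactly what allows the lower semicontinuity / limit-passing argument to work. Once that is in place the remaining bookkeeping (compactness of $\mathcal{D}$, boundedness of $\mathcal{U}$, non-emptiness of $\mathscr{D}$) is routine.
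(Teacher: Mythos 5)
Your proposal is correct and takes essentially the same route as the paper: both reduce the claim to the Fleming--Rishel hypotheses (non-emptiness of $\mathscr{D}$ via invariance of $\mathcal{D}$, compactness and convexity of $\mathcal{U}$, affine dependence of the dynamics on $\textbf{u}$, and convexity plus lower boundedness/coercivity of the running cost), the paper citing Theorem 4.1 and Corollary 6.1 of \citep{Fleming} where you unpack the direct method (minimizing sequence, weak compactness, Arzel\`a--Ascoli, weak lower semicontinuity) and obtain the regularity via the Pontryagin characterization of $\textbf{u}_{\ast}$ as a truncation of a continuous function. The substance is the same; your version is merely more self-contained.
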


\begin{proof}
The proof consists of verifying the conditions of Theorem 4.1 in \citep{Fleming} adapted to the \textit{Lagrange optimal control problem} \eqref{problema de control óptimo}. To this end, it is not difficult to check the following:
\begin{itemize}
    \item[a.] $\mathscr{D}$ is not empty, $\mathcal{U}$ and  $\mathcal{D}^{\prime}_{t}$ are compact.
    \item[b.] $F(t,\textbf{x},\cdot)$ is a linear function, i.e.,  $F(t,\textbf{x},\textbf{u})=A(t,\textbf{x})\textbf{u}+B(t,\textbf{x})$ with 
    \[
    A(t,\textbf{x})=\begin{pmatrix}
        h(S)I & 0 & \gamma_{2}I\\
        -h(S)I & -\gamma I & -\gamma I \\
        0 &\gamma_1 I & 0
    \end{pmatrix} \quad \text{and} \quad B(t,\textbf{x})=\begin{pmatrix}
        \Lambda-h(S)I-\phi S+\gamma_{2}I+\rho \Omega R\\
        (h(S)+\sigma_1+\gamma)I+(1-\Omega)\rho R\\
        -\gamma_{1} I -(\phi+\delta_2+\rho)
    \end{pmatrix}.
    \]
    \item[c.] $G(t,\textbf{x}, \cdot)$ is a convex function in  $\mathcal{U}$, indeed, the matrix
    \[
D^{2}_{\textbf{u}}G(t,\textbf{x},\textbf{u})=\begin{pmatrix}
        \mathscr{C}_1 & 0 & 0\\
        0 & \mathscr{C}_2 & 0\\
        0 & 0 & \mathscr{C}_3
    \end{pmatrix},
    \]
is positive defined. Therefore, $G(t,\textbf{x},\cdot)$ is a convex function in  $\mathcal{U}$. 
    \item[d.]  $G(t,\textbf{x},\textbf{u})\geq c\|\textbf{u}\|^2-\Lambda/\phi$. In effect, since $0\leq I,R$ and any solution of the system of equations at \eqref{problema de control óptimo} is bounded (it follows from Teorema \ref{invarianza}), then, there exists $R_{M}>0$ such that $R\leq R_{M}$. (In particular, for $\textbf{x}_{0}\in \mathcal{D}$, $R_{M}=\Lambda/\phi$). In consequence
\[
G(t,x,u)=I-R+(\mathscr{C}_1 u^{2}_1+\mathscr{C}_2 u^{2}_2+\mathscr{C}_3 u^{2}_3)/2\geq c \|\textbf{u}\|^2-R_{M}, 
\]
with $\displaystyle{2c=\max \left\{\mathscr{C}_{1},\mathscr{C}_2,\mathscr{C}_3 \right\}}.$
\end{itemize}
Conditions a., b., and c. are the hypotheses of Theorem 4.1 (in particular, Corollary 4.1) in \citep{Fleming}. Therefore, there exists $x_{\ast} \in \mathcal{D}$ tal que $(\textbf{x}_{\ast}(t,\textbf{x}_{\ast}),\textbf{u}_{\ast}(t))$ minimizes $\mathscr{F}$ for all $t\in [0,T_F]$. Furthermore, since $D^{2}_{\textbf{u}}G(t,\textbf{x},\cdot)$ is positive definite, the regularity of the optimal trajectory is derived from Corollary 6.1 in \citep{Fleming}. 
\end{proof}

\subsection{On the characterization of the optimal trajectories.} \label{subsec_optimal_trajectories}
Pontryagin's Maximum Principle (see Chapter \citep{Fleming}) is a well-known method that provides the necessary conditions on the optimal trajectories $(\textbf{x}(t,x_{\ast}),\textbf{u}_{\ast}(t))$ of the functional $\mathscr{F}$. The Hamiltonian function corresponding to \eqref{problema de control óptimo} is given by
\begin{equation}\label{hamiltoniano}
\begin{split}
&\mathcal{H}(t,\textbf{x},\textbf{u},\textbf{z})=G(t, \textbf{x}, \textbf{u},\textbf{z})+\textbf{z}\cdot F(t, \textbf{x}, \textbf{u},\textbf{z}),\\
&\mathcal{H}(t,\textbf{x},\textbf{u},\textbf{z}) = I - R+ \frac{\mathscr{C}_1}{2} u_1^2 + \frac{\mathscr{C}_2}{2} u_2^2 + \frac{\mathscr{C}_3}{2} u_3^2
+ \sum_{j=1}^{3}z_jF_{j}(t,\textbf{x},\textbf{u}),
\end{split}
\end{equation}
where $F_{j}(t,\textbf{x}(t),\textbf{u}(t))$ are given by 
\begin{equation}\label{coordenadas del campo vectorial}
\begin{split}
    F_{1}(t,\textbf{x}(t),\textbf{u}(t))&=\Lambda - (1-u_1)h(S)I - \phi S + (1+u_3)\gamma_2I + \rho \Omega R, \\
    F_{2}(t,\textbf{x}(t),\textbf{u}(t))&=(1-u_1)h(S)I - (\phi + \delta_1)I - (1+u_2)\gamma_1I - (1+u_3)\gamma_2I + (1-\Omega)\rho I,\\
    F_{3}(t,\textbf{x}(t),\textbf{u}(t)) &= (1+u_2)\gamma_1I - (\phi + \delta_2 + \rho)R.
\end{split}
\end{equation}
and $\textbf{z}=(z_{1},z_2,z_3)$ corresponds to the auxiliary variables (named co-state) which provide additional conditions for determining the optimal trajectories $\textbf{x}_{\ast}(t,\textbf{x}_{\ast})$ and $\textbf{u}_{\ast}(t).$ For \eqref{problema de control óptimo}, the \textit{transversality conditions} are $\displaystyle{\textbf{z}(t)=\textbf{0}_{\mathbb{R}^3}}$ and therefore, the associated Hamiltonian system takes the form
\begin{equation}\label{PVI hamiltonian}
\begin{split}
\dot{\textbf{x}}(t)&=\partial_{\textbf{z}}\mathcal{H}(t,\textbf{x}(t),\textbf{u}(t),\textbf{z}(t)), \hspace{4.9 cm} \textbf{x}(0)=\textbf{x}_{0},\\
\dot{\textbf{z}}(t)&=-\partial_{\textbf{x}}G(t,\textbf{x}(t),\textbf{u}(t))-\textbf{z}(t)\cdot\partial_{\textbf{x}}F(t,\textbf{x}(t),\textbf{u}(t)), \qquad \textbf{z}(T_F)=\textbf{0}_{\mathbb{R}^{3}},\\
\end{split}
\end{equation}
which, written out explicitly, is given by
\begin{equation}\label{full hamiltonian system}
  \left\{
\begin{aligned}
 \dot{S} &= \Lambda - (1-u_1) h(S)I- \phi S + (1+u_3)\gamma_2\,I\, + \rho \Omega R, \\
    \dot{I} &= (1-u_1)h(S)I - (\phi + \delta_1)I - (1+u_2)\gamma_1\,I - (1+u_3)\gamma_2\,I + (1-\Omega)\rho R,\\
    \dot{R} &= (1+u_2)\gamma_1\,I - (\phi + \delta_2 + \rho)R,\\
     \dot{z}_1 &=(1-u_1)h^{\prime}(S)I(z_1-z_2)+\phi z_1, \\
        \dot{z}_2 &=((1-u_1)h(S)-(1+u_3)\gamma_2)(z_1-z_2)+(1+u_2)\gamma_1(z_2-z_3)+z_2(\phi+\delta_1)-1,\\
        \dot{z}_3 &= 1-\rho\Omega(z_1-z_2)-z_2\rho+(\phi+\delta_2+\rho)z_3.
    \end{aligned}
    \right.
    \end{equation}
with boundary conditions
\[
\begin{split}
S(0)&=S_{0}, \quad I(0)=I_{0}, \quad \text{y} \quad R(0)=R_{0}.\\
z_{1}(T_{F})&=0, \quad z_2(T_{F})=0, \quad \text{y} \quad z_{3}(T_{F})=0.
\end{split}
\]
\textit{The optimality condition} of $\mathcal{H}$ with respect to $\textbf{u}$ invites to the computation of the nonlinear system
\begin{equation*}
D_{\textbf{u}}\mathcal{H}(t,\textbf{x},\textbf{z})=\textbf{0} \quad \Leftrightarrow \quad \frac{\partial \mathcal{H}(t,\textbf{x},\textbf{u},\textbf{z})}{\partial u_{i}}=0, \quad i=1,2,3.
\end{equation*}
given explicitly by 
\begin{equation}\label{controles}
\begin{split}
\frac{\partial \mathcal{H}(t,\textbf{x},\textbf{u},\textbf{z})}{\partial u_{1}}&= \mathscr{C}_1 u_1 + h(S)I(z_1-z_2) = 0 \quad \Rightarrow \quad u_1 = \frac{(z_2 - z_1)h(S)I}{\mathscr{C}_1},\\
\frac{\partial \mathcal{H}(t,\textbf{x},\textbf{u},\textbf{z})}{\partial u_{2}}&= \mathscr{C}_2 u_2 -\gamma_1I(z_2-z_3) = 0 \quad \Rightarrow \quad u_2= \frac{(z_2- z_3)\gamma_1 I}{\mathscr{C}_2},\\
\frac{\partial \mathcal{H}(t,\textbf{x},\textbf{u},\textbf{z})}{\partial u_{3}}&= \mathscr{C}_3 u_3 +\gamma_2 I( z_1-z_2)= 0 \quad \Rightarrow \quad u_3= \frac{(z_2 - z_1)\gamma_2 I}{\mathscr{C}_3},
\end{split}
\end{equation}
Now, the optimal control must satisfy $\displaystyle{u_{\ast}(t)\in \mathcal{U}}$ for all $t\in [0,T_F]$, therefore, each coordinate $u_{i,\ast}(t)$ of $\displaystyle{\textbf{u}_{\ast}(t)}$ is characterized by the conditions
\begin{equation}\label{controles óptimos}
\begin{split}
u_{1,\ast}(t)&=\max\left\{u_{1,m},\min\left\{u_{1,M},\frac{(z_{2,\ast}(t) - z_{1,\ast}(t))h(S_{\ast}(t))I_{\ast}(t)}{\mathscr{C}_1}\right\}\right\}, \\
u_{2,\ast}(t)&=\max\left\{u_{2,m},\min\left\{u_{2,M},\frac{(z_{2,\ast}(t) - z_{3,\ast}(t))\gamma_1I_{\ast}(t)}{\mathscr{C}_2}\right\}\right\},\\
u_{3,\ast}(t)&=\max\left\{u_{3,m},\min\left\{u_{3,M},\frac{(z_{2,\ast}(t) - z_{1,\ast}(t))\gamma_2 I_{\ast}(t)}{\mathscr{C}_3}\right\}\right\},
\end{split}
\end{equation}
where
\[
\textbf{x}_{\ast}(t,\textbf{x}_{\ast})=\begin{pmatrix}
S(t,S_{\ast})\\
I(t,I_{\ast})\\
R(t,R_{\ast})
\end{pmatrix} \quad \text{and} \quad \textbf{z}_{\ast}(t)=\begin{pmatrix}
z_{1,\ast}(t)\\
z_{2,\ast}(t)\\
z_{3.\ast}(t)
\end{pmatrix},
\]
are the solutions of the Hamiltonian system
\begin{equation}\label{PVI hamiltonian 2}
\begin{split}
\dot{\textbf{x}}(t)&=\partial_{\textbf{z}}\mathcal{H}(t,\textbf{x}(t),\textbf{u}_{\ast}(t),\textbf{z}(t)),\\
\dot{\textbf{z}}(t)&=-\partial_{\textbf{x}}G(t,\textbf{x}(t),\textbf{u}_{\ast}(t))-\textbf{z}(t)\cdot\partial_{\textbf{x}}F(t,\textbf{x}(t),\textbf{u}_{\ast}(t)),\\
\end{split} 
\end{equation}
with boundary conditions 
\[
\textbf{x}(0,x_{\ast})=\textbf{x}_{\ast}=\begin{pmatrix}
    S_{\ast}\\
    R_{\ast}\\
    I_{\ast}
\end{pmatrix} \quad \text{and} \quad z_{\ast}(T_{F})=\begin{pmatrix}
    0\\
    0\\
    0
\end{pmatrix}.
\]
All of the above constitutes Pontryagin's Maximum Principle applied to the minimization problem \eqref{problema de control óptimo}, which we describe in the following theorem.
\begin{theorem}(Necessary conditions for the optimal trajectories)
Let $\mathcal{H}:[0,T_{F}]\times \mathcal{D}\times \mathcal{U}\times \mathbb{R}^{3}\to \mathbb{R}$ given in \eqref{hamiltoniano}, the hamiltonian function associated to optimization problem given at \eqref{problema de control óptimo}. Given optimal state variables  $\textbf{x}_{\ast}(t)=(S_{\ast}(t),I_{\ast}(t),R_{\ast}(t))$ and optimal control variables  $\textbf{u}_{\ast}(t)=(u_{1,\ast}(t),u_{2,\ast}(t),u_{3,\ast}(t))$ that minimize the functional $\mathscr{F}$, then there exist auxiliary variables (co-state variables)  $\textbf{z}(t)=(z_{1,\ast}(t),z_{3,\ast}(t),z_{3,\ast}(t))$ that satisfy the system of equations
\begin{equation*}\label{half hamiltonian system}
\left\{
\begin{aligned}
     \dot{z}_1 &=(1-u_1)h^{\prime}(S)I(z_1-z_2)+\phi z_1, \\
        \dot{z}_2 &=((1-u_1)h(S)-(1+u_3)\gamma_2)(z_1-z_2)+(1+u_2)\gamma_1(z_2-z_3)+z_2(\phi+\delta_1)-1,\\
        \dot{z}_3 &= 1-\rho\Omega(z_1-z_2)-z_2\rho+(\phi+\delta_2+\rho)z_3.
    \end{aligned}
    \right.
    \end{equation*}
with boundary conditions
\[
z_{1}(T_F)=0, \quad z_{2}(T_F)=0 \quad \text{and} \quad z_{3}(T_F)=0.
\]
Moreover, the optimal control variables satisfies $\displaystyle{D_{u}\mathcal{H}(t,\textbf{x}_{\ast},\textbf{u}_{\ast},\textbf{z}_{\ast})=\textbf{0}}$, which is equivalent to 
\begin{equation*}
\begin{split}
u_{1,\ast}(t)&=\max\left\{u_{1,m},\min\left\{u_{1,M},\frac{(z_{2,\ast}(t) - z_{1,\ast}(t))h(S_{\ast}(t))I_{\ast}(t)}{\mathscr{C}_1}\right\}\right\}, \\
u_{2,\ast}(t)&=\max\left\{u_{2,m},\min\left\{u_{2,M},\frac{(z_{2,\ast}(t) - z_{3,\ast}(t))\gamma_1I_{\ast}(t)}{\mathscr{C}_2}\right\}\right\},\\
u_{3,\ast}(t)&=\max\left\{u_{3,m},\min\left\{u_{3,M},\frac{(z_{2,\ast}(t) - z_{3,\ast}(t))\gamma_2 I_{\ast}(t)}{\mathscr{C}_3}\right\}\right\}.
\end{split}
\end{equation*}
\end{theorem}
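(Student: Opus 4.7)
The plan is to invoke Pontryagin's Maximum Principle (PMP) as stated in Chapter III of \citep{Fleming}, since the existence of a minimizing pair $(\textbf{x}_\ast,\textbf{u}_\ast)$ has just been established by the previous theorem. The task therefore reduces to deriving the adjoint (co-state) equations, identifying the transversality conditions, and using the strict convexity of $\mathcal{H}$ in $\textbf{u}$ to characterize $\textbf{u}_\ast$ via projection onto the admissible set $\mathcal{U}$.

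First, I would verify the hypotheses required by PMP. The vector field $F$ in \eqref{coordenadas del campo vectorial} is $C^1$ in $(\textbf{x},\textbf{u})$ on $\mathcal{D}\times\mathcal{U}$ (the only nonlinearity, the Holling response $f(S)=\alpha S/(1+\alpha\beta S)$, is smooth for $S\ge 0$), the running cost $G$ is $C^1$ in both arguments, the control set $\mathcal{U}$ is compact and convex, $\mathcal{D}$ is positively invariant by Theorem \ref{invarianza}, and the problem has neither a terminal payoff nor a terminal state constraint.

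Second, I would derive the adjoint system and the transversality conditions. Starting from the Hamiltonian $\mathcal{H}$ in \eqref{hamiltoniano}, I compute $\dot{z}_i=-\partial\mathcal{H}/\partial x_i$ for $x_1=S,\ x_2=I,\ x_3=R$. For $z_1$, only $f(S)$ depends on $S$, so the contributions of $z_1F_1$ and $z_2F_2$ collapse to $\dot{z}_1=(1-u_1)f'(S)I(z_1-z_2)+\phi z_1$. For $z_2$, I collect the $I$-dependence in $F_1$, $F_2$, $F_3$ together with the $+1$ from the $I$-term of $G$; after regrouping by the factors $(z_1-z_2)$ and $(z_2-z_3)$, the second claimed equation appears. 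For $z_3$, the $R$-terms in $F_1$, $F_2$, $F_3$ and the $-R$ term of $G$ yield the third equation. Finally, since $\textbf{x}(T_F)$ is free and $G$ contains no endpoint term, PMP furnishes the transversality condition $\textbf{z}(T_F)=\textbf{0}_{\mathbb{R}^3}$.

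Third, I would characterize the optimal controls. Since $D^2_{\textbf{u}}\mathcal{H}=\mathrm{diag}(\mathscr{C}_1,\mathscr{C}_2,\mathscr{C}_3)$ is positive definite, $\mathcal{H}(t,\textbf{x}_\ast,\cdot,\textbf{z}_\ast)$ is strictly convex in $\textbf{u}$, so the pointwise minimization over the box $\mathcal{U}$ admits a unique solution obtained by componentwise projection of the unconstrained critical point. Solving the three equations $\partial\mathcal{H}/\partial u_i=0$ gives $u_1=(z_2-z_1)f(S)I/\mathscr{C}_1$, $u_2=(z_2-z_3)\gamma_1 I/\mathscr{C}_2$ and $u_3=(z_2-z_1)\gamma_2 I/\mathscr{C}_3$, and projecting onto each interval $[u_{i,m},u_{i,M}]$ yields the $\max\{u_{i,m},\min\{u_{i,M},\cdot\}\}$ formulas in the statement. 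The main obstacle is essentially bookkeeping: keeping track of the nine partial derivatives $\partial F_j/\partial x_i$ with the correct signs, and recognizing the grouping into the differences $z_1-z_2$ and $z_2-z_3$ that makes the adjoint equations take the compact form displayed in the theorem.
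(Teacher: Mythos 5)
Your proposal is correct and follows essentially the same route as the paper: the paper's ``proof'' is a one-line pointer to the Pontryagin derivation already carried out in the section (adjoint system \eqref{full hamiltonian system}, transversality $\textbf{z}(T_F)=\textbf{0}$, and the stationarity conditions $\partial\mathcal{H}/\partial u_i=0$ followed by projection onto the box $[u_{i,m},u_{i,M}]$), which is exactly what you fill in. Note only that your computation gives $u_3=(z_2-z_1)\gamma_2 I/\mathscr{C}_3$, which agrees with the paper's own intermediate display but not with the formula $(z_{2,\ast}-z_{3,\ast})\gamma_2 I_\ast/\mathscr{C}_3$ printed in the theorem statement and in \eqref{controles óptimos}; your version is the correct one, and the statement contains a typo there.
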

\begin{proof}
The proof follows the lines of Pontryagin's Maximum Principle, \citep{Fleming}, which have been developed in this section.
\end{proof}


\begin{remark} \label{relation u1 u3}
From the equations of system \eqref{controles}, the expression is derived directly
\[
u_1= \left(\frac{\mathscr{C}_3 h(S)}{\mathscr{C}_1 \gamma_2}\right)u_3 \quad \Leftrightarrow \quad \mathscr{C}_1 u_1= \left(\frac{h(S)}{ \gamma_2}\right)\mathscr{C}_3u_3. 
\]
\end{remark}

As can be seen, both an increase in the victimization rate, represented by the function $h(S)$, and an increase in the cost ratio $\mathscr{C}_3/\mathscr{C}_1$ imply that control $u_1$ must increase, that is, that more must be invested in preventive programs. The reason is that, firstly, an increase in $h(S)$ indicates that the rate at which individuals in $I$ victimize those in $S$ is increasing. Secondly, a higher cost ratio suggests that preventive control ($u_1$) is relatively cheaper than reintegration control ($u_3$). Likewise,
if the rate at which individuals in group $I$ desist from criminal activity ($\gamma_2$) decreases, implying that individuals in group $I$ are less likely to abandon criminal behavior, control one should be reinforced, and therefore it makes sense for $u_1$ to increase when $\gamma_2$ decreases. In a way, analyzing the situation without controls reveals that controls would serve to reinforce initiatives that decrease the number of offenders and discourage activities that increase the number of individuals in this group.

In this sense, the relationship between $u_1$ and $u_3$ in the equation we are analyzing suggests a complementary nature between the two controls: by reinforcing social reintegration actions (control $u_3$), it becomes necessary to simultaneously strengthen preventive strategies (control $u_1$), since when an effort is made to persuade those who have already been involved in gangs to leave those environments ($u_3$), it is necessary to increase preventive efforts, that is, those that involve activities such as learning life skills for community life, motivation to attend educational centers, and employment opportunities. In this way, the aim is to prevent those who have left group $I$ from reoffending, and it is also intended that new members of group $S$ perceive greater benefits from remaining free of crime. However, this is achieved not only through repressive measures such as arrests, but also by showing new members of the group $S$ that being there brings more benefits to them in terms of their quality of life, demonstrating that an improvement in quality of life is a preferable path to recidivism or initial involvement in crime.

\section{Numerical simulations and Discussions}\label{sec-6}

This section describes the numerical strategy used to generate the results of the study. We first perform a sensitivity analysis of the uncontrolled system to identify the parameters that most strongly affect the dynamics of $S$, $I$ and $R$. We then simulate three controlled systems, each including a single control, to isolate and quantify the individual impact of preventive, punitive, and reintegration interventions. Finally, the full system with all three controls is simulated to analyze their combined effects and interactions.

The control variables \(u_1\), \(u_2\), and \(u_3\) capture the intensities of policy associated with the prevention, enforcement, and desistance mechanisms, respectively. Specifically, \(u_1\) represents the intensity of prevention campaigns implemented by government agencies and non-governmental organizations (NGOs) through television, printed media, and door-to-door outreach, among other channels; as \(u_1 \to 1\), preventive efforts reach their maximum effectiveness. The control \(u_2\) denotes the proportion of newly identified individuals in group \(I\) who are arrested and placed in juvenile detention, where \((1+u_2)\) reflects improved efforts to enforce the law and juvenile justice relative to a baseline level. Finally, \(u_3\) corresponds to the intensity of desistance and rehabilitation policies, measured by the proportion of individuals in the group \(I\) who withdraw from unlawful activities as a result of participation in government programs promoting social reintegration and law-abiding lifestyles (En la Buena! is a nice example of these kinds of programs). All controls are bounded in the interval \([0,1]\).

\subsection{Sensitivity analysis of the parameters and simulation without controls}

The sensitivity analysis is performed using elasticity measures computed via finite-difference approximations, as closed-form analytical expressions relating parameters to model outcomes are not available. Each parameter is perturbed by a small relative amount ($±1\%$), and the resulting relative changes in selected summary measures of the system trajectories are evaluated.

To avoid biases associated with adaptive time-stepping, all numerical solutions are interpolated onto a common time grid, and summary measures are computed as the area under the curve (AUC) using the trapezoidal rule. Parameter elasticities are approximated using centered finite differences, which reduce truncation error and provide more stable estimates in the presence of nonlinear dynamics.


\begin{figure}[ht]
\centering
\includegraphics[width=0.62\textwidth]{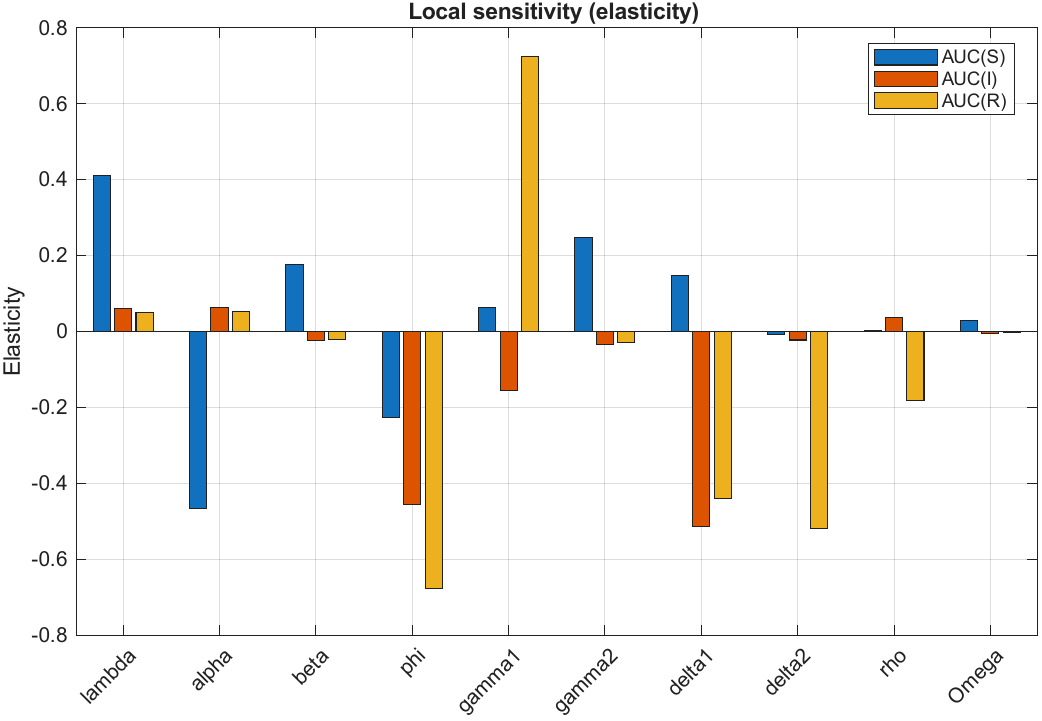}
\caption{Parameter elasticities of the Crime/SIR model.
Elasticities are computed using centered finite differences based on small relative parameter perturbations ($\pm 1\%$). Model responses are summarized by the area under the curve (AUC) of the corresponding state variables, after interpolating all numerical solutions onto a common time grid. Elasticity values are dimensionless and quantify the relative sensitivity of model outcomes to parameter variations.}
\label{fig parameter sensitivity}
\end{figure}

The values used in this numerical exercise are based on data from the En la Buena! program, provided by the Cali Security Observatory (see Table \ref{table:parameters} below). The data set includes a total of $1,539$ young individuals, of whom $1,357$ are classified as susceptible $S$, $136$ as infected $I$, and the remainder as recovered $R$.

Figure \ref{fig parameter sensitivity} reports the local sensitivity elasticities of the areas under the curve (AUC) for the Susceptible ($S$), Criminal ($I$), and Recovered ($R$) populations with respect to marginal changes in model parameters. Because elasticity is dimensionless, they allow for a direct comparison of the relative importance of parameters and their links to the policy controls $u_1, u_2, u_3$ considered. We now examine the parameters directly related to three controls variables.

Parameters $\alpha$ and $\beta$, play a central role in the system’s cumulative dynamics. Increases in effective criminal contact through $\alpha$, reduce AUC($S$) and increase AUC($I$), while an increase in the saturation parameter $\beta$ slightly increases AUC($S$) and reduces AUC($I$). Overall, these results highlight the relevance of preventive control in limiting criminal activity and preserving the susceptible population. 

The parameter $\gamma_1$, shows a negative elasticity for AUC($I$) and a positive one for AUC($R$), indicating that punitive measures reduce the cumulative time spent in criminal activities, though with limited effects on AUC($S$). Finally, $\gamma_2$, exhibits positive elasticities for AUC($S$) and negative ones for AUC($I$), suggesting that rehabilitation and social reintegration reduce cumulative criminal involvement while expanding the susceptible population.

Taken together, the sensitivity analysis indicates that sustained crime reduction requires a combination of preventive, punitive, and rehabilitative policies, underscoring their complementarity in effective public policy design.


Given the importance of the Holling type II function in our system, we want to show the behavior of our system imposing different values of the parameter $\beta$, which governs transmission from the susceptible compartment to the infected compartment. The analysis indicates that when beta is low, the dynamics is dominated by the infected group $I$. In this context, susceptible individuals (representing vulnerable youth at risk of becoming involved in criminal activity) are rapidly converted into infected individuals, modeled as those actively engaged in delinquent or gang-related behavior.

The model shows that for very low values of $\beta$; this conversion occurs at an accelerated rate. Specifically, when $\beta =2$, the susceptible population persists until approximately period 15; when $\beta=1$, it disappears by period 3; when $\beta= 0.5$, by period $1.5$; and for $\beta \leq 0.3$, the susceptible group is effectively depleted immediately. This rapid exhaustion of the susceptible pool illustrates the severity of contagion-like dynamics in environments with high rates of criminal influence. Additionally, the parameter $\Omega$, which represents the fraction of individuals returning to the susceptible group after release from the juvenile penal system (SRPA), is assumed to be small. Under these conditions, the infected population becomes dominant, implying that the community is effectively under the control of criminals ($I$ individuals).
Such dynamics are characteristic of high-crime communities, where intervention is essential. Therefore, it is crucial to implement control strategies that aim to reduce the prevalence of infected individuals and increase the proportion of recovered individuals $R$ who reintegrate into the susceptible group $S$ through rehabilitation and desistance processes. Given the constraints on public funding, these interventions must be designed to be effective and economically viable. 

\begin{figure}[ht]
\centering
\includegraphics[width=\textwidth]{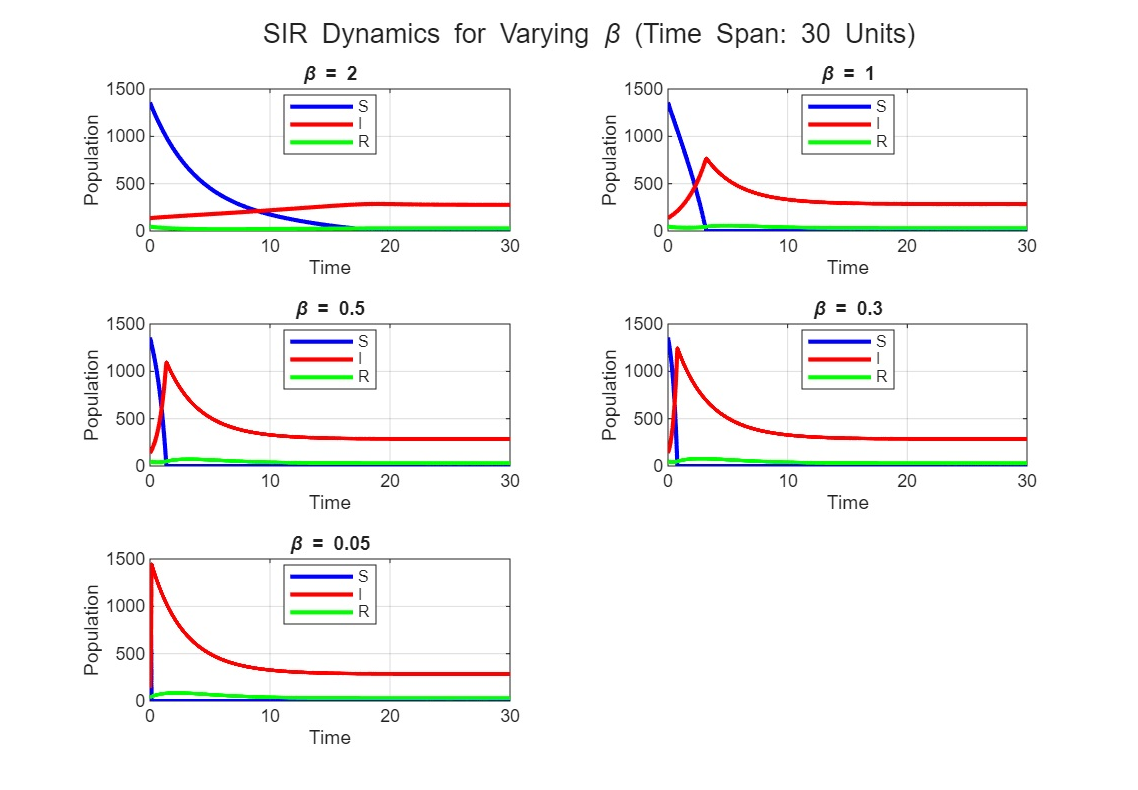}
\caption{Numerical simulation of the uncontrolled system \eqref{SIR-criminal} for different values of $\beta$.}
\label{fig roots linear}
\end{figure}

\begin{table}[ht]
\centering
\caption{Description of parameters in system \eqref{SIR-criminal}.}
\renewcommand{\arraystretch}{1}
\begin{tabular}{@{}p{3cm} p{4cm} p{2.5cm}@{}}
\toprule
\textbf{Parameter} & \textbf{Value} & \textbf{Reference} \\
\midrule
$\Lambda$ & \(100\) & Assumed \\ 
$\phi$ & \(0.27\) & COS \\ 
$\delta_1$ & \(0.05\) & Assumed \\ 
$\delta_2$ & \(0.02\) & Assumed \\ 
$\Omega$ & \(0.3\) & ICBF \\ 
$\rho$ & \(0.2\) & Assumed \\ 
$\gamma_1$ & \(0.05\) & COS \\ 
$\gamma_2$ & \(0.1\) & IGP \\ 
$\alpha$ & \(0.4\) & IGP \\ 
$\beta$ & \{2, 1, 0.5, 0.3, 0.05\} & Assumed \\ 
\bottomrule
\end{tabular}
\label{table:parameters}

\vspace{0.5cm}
\begin{flushleft}
\small
\textbf{Notes:} COS stands for \textit{Cali's Security Observatory}. IGP refers to the article entitled \textit{“Influence of Peer Groups on Adolescents at the Social Support Foundation Involved in Psychoactive Substance Use and Offenses”}. ICBF is \textit{Colombian Institute for Family Welfare (Instituto Colombiano de Bienestar Familiar)}.
\end{flushleft}
\end{table}

\subsection{Simulations with one control at a time} The purpose of this section is to analyze the impact of each control separately. Following the results obtained in the previous section, first of all we conducted a simulation applying only control $u_1$ with an associated cost of $5$, which is the control that impacts the $\alpha$ and $\beta$ parameters as specified in system \ref{full hamiltonian system}, while maintaining all parameters consistent with the previous exercises, except for beta, which was set to 0.3 (a value corresponding to a significant capture rate, see Figure \ref{fig roots linear}). The Figure \ref{with control} represents the trajectories of the state variables $S$, $I$, and $R$ over time, comparing the cases with and without the control $u_1$. This control represents the intensity of prevention or social intervention programs, which reduce the rate of transition from susceptible individuals to criminals. For this simulation, the parameter $\alpha$, which it is related to the probability that a susceptible individual will come into contact with a criminal and engage in criminal behavior, is modeled as a decreasing function of cumulative control, $\alpha_{eff}(t) = \alpha_0 e^{-ku_{cum}(t)}$, where $u_{cum}(t)$ represents the weighted sum of intervention efforts over time. This reflects the hypothesis that prevention and rehabilitation policies can produce persistent structural effects on social dynamics, progressively reducing the propensity for crime even after control efforts have diminished.
\begin{figure}[ht]
\centering
\includegraphics[width=0.7\textwidth]{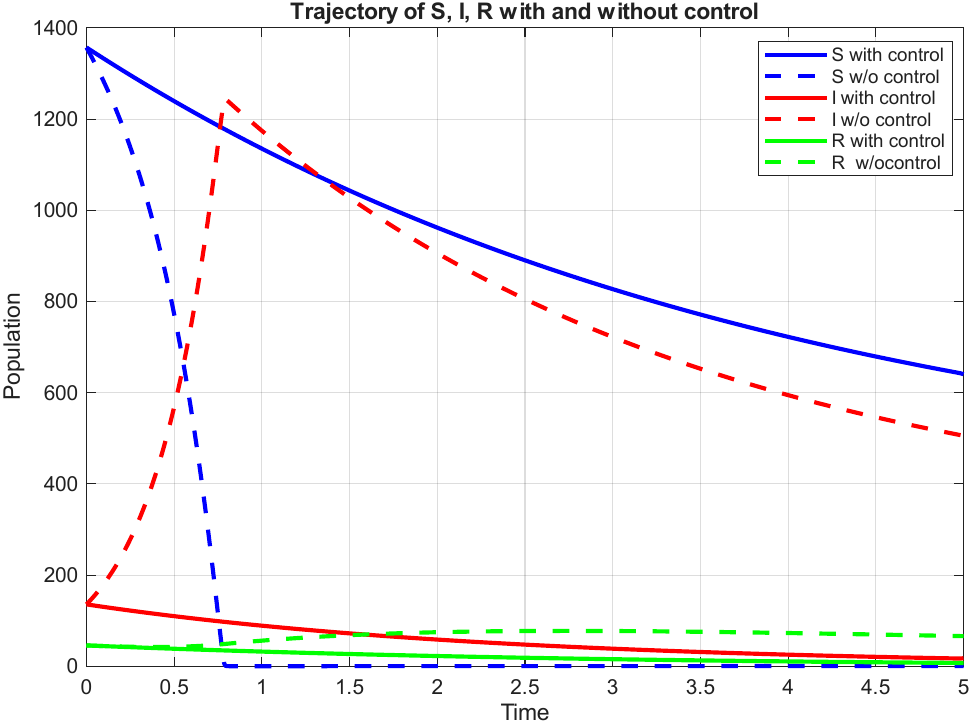}
\caption{Numerical Simulation for $\beta=0.3$ with and without control $u_1$}
\label{with control}
\end{figure}
It should be noted that, despite the relatively low beta value and the more permissive modeling of the parameter $\alpha$, the application of the control $u_1$ effectively reduces the number of infected individuals $I$ and promotes an increase in the susceptible population $S$, consistent with the expected outcomes of a policy aimed at penalizing criminal behavior. Additionally, the analysis includes a five-period horizon, reflecting the typical duration of a local administration (four years), with the extra period included to observe the behavior of the system beyond the immediate policy timeframe. It should be noted that the cost of implementing the control is relatively small compared to the objective value without control. The cost difference is approximately a factor of 10.

Next we conducted a simulation applying only control $u_2$ with an associated cost of $0.1$, which impacts the proportion of newly arrested youths, reflecting the additional effort the local government is making to prevent crime through arrests. In Figure \ref{with withou control u2}, we see that increasing $(1 + u_2)\gamma_1$ accelerates the movement of individuals from $I$ to $R$. However, $S$ is barely affected directly, but its relative level can increase as $I$ decreases. We see that this control is weak in its intention to increase the number of $S$. However, a rehabilitation effort that decreases the recidivism rate denoted by $(1-\Omega)$ would generate a positive impact on the number of $S$. This implies that a strong capture policy is insufficient if the rehabilitation process leading to non-recidivism is inadequate. At this point, we want to emphasize that despite the low direct effectiveness of this control in terms of increasing $S$, it is important to maintain it in conjunction with other measures for it to ultimately be effective.

\begin{figure}[ht]
\centering
\includegraphics[width=0.74\textwidth]{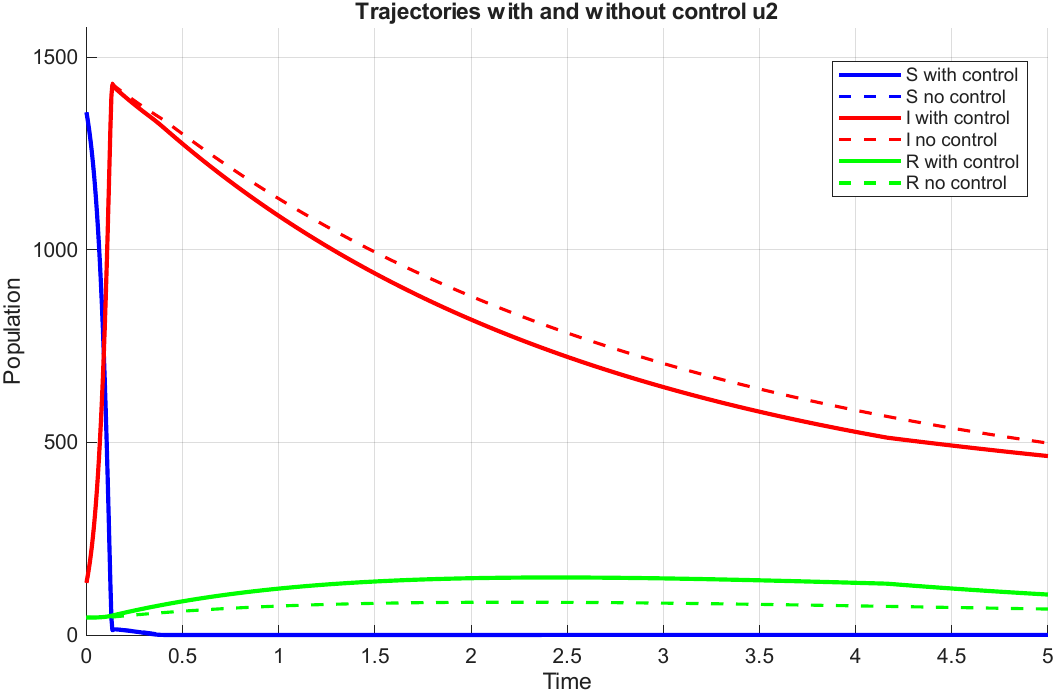}
\caption{Numerical Simulation with and without control $u_2$.}
\label{with withou control u2}
\end{figure}

Finally, we conducted a simulation applying only control $u_3$ with an associated cost of $0.1$, which impacts social reintegration due to the effect of rehabilitation programs. In Figure \ref{with withou control u3} we see that having a low $\beta$ value means that even though we have control $u_3$ moving young people from $I$ to $S$, what we are actually doing by applying control $u_3$ without considering the other controls is fueling the predatory power of population $I$ through $\alpha$ and $\beta$. Therefore, what happens is that while the control $u_1$ is inactive, the power of the control $u_3$ to increase the $S$ population is neutralized by the $I$ population. This effect was already observed theoretically (see remark \ref{relation u1 u3} at the end of subsection \ref{subsec_optimal_trajectories}), where we mentioned the complementary nature of $u_1$ and $u_3$, which seems logical at first glance, but it is interesting that we can observe the theoretical result through numerical simulations.

\begin{figure}[ht]
\centering
\includegraphics[width=0.74\textwidth]{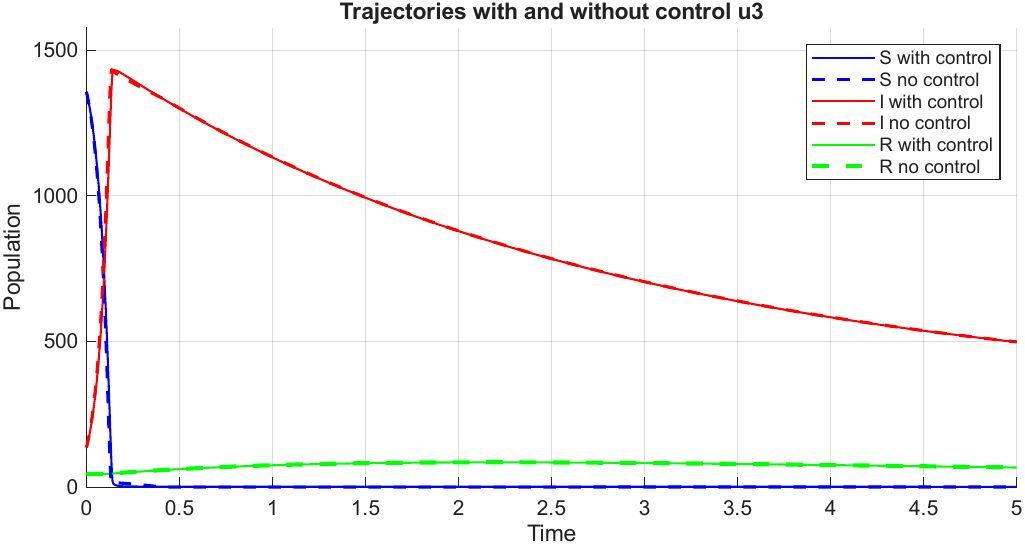}
\caption{Numerical Simulation with and without control $u_3$.}
\label{with withou control u3}
\end{figure}

\subsection{Simulation of the fully controlled system}
We conducted a simulation applying all three controls simultaneously, which significantly reduces the trajectory of $I$ compared to the uncontrolled scenario, increases the susceptible population $S$, and reduces the accumulation in the correctional population $R$, as can be seen in Figure \ref{with without all controls}.

\begin{figure}[ht]
\centering
\includegraphics[width=0.7\textwidth]{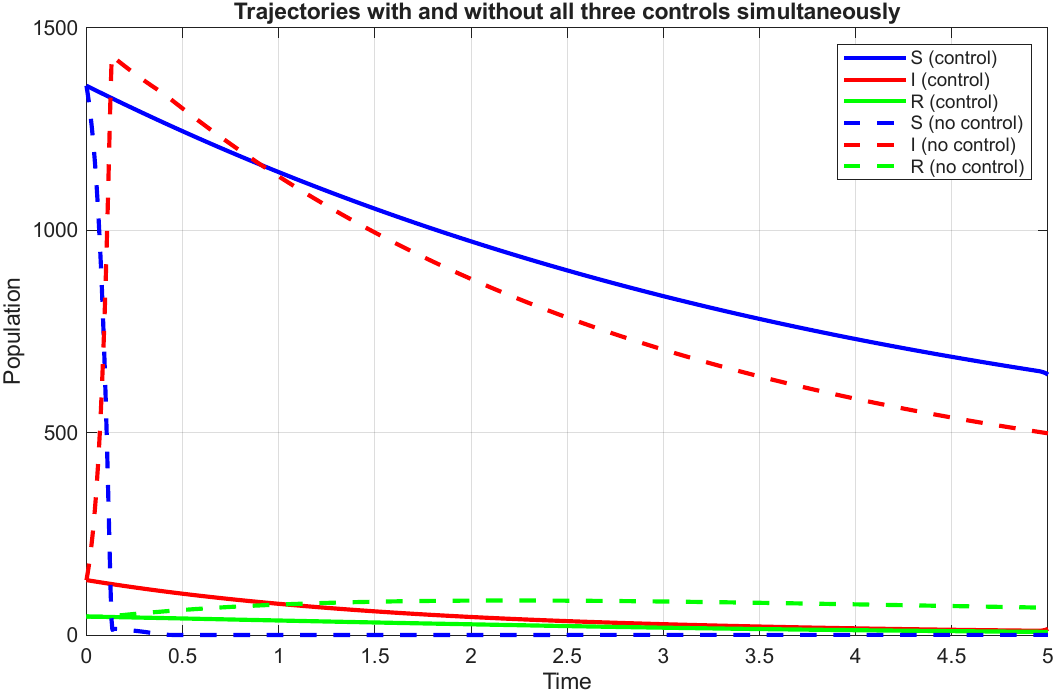}
\caption{Numerical Simulation with and without all three controls $u_1$, $u_2$ and $u_3$.}
\label{with without all controls}
\end{figure}

The results show that an integrated strategy based on prevention, surveillance, and social reintegration significantly reduces the number of young people involved in criminal activities. Prevention $u_1$ addresses the root of the problem: it decreases the likelihood that at-risk youth will join gangs or groups that commit crimes, thus reducing the social transmission of criminal behavior. This translates into fewer young people exposed to criminal dynamics and reduces future pressure on the juvenile justice system. Surveillance and punitive control measures $u_2$ complement this effect by reducing the time a young person remains involved in criminal activities and limiting their ability to negatively influence other young people. This control generates visible impacts in the short term, decreasing active criminal behavior and increasing institutional capacity to respond. At the same time, it prevents criminal networks from expanding through imitation or recruitment. Finally, social reintegration programs $u_3$ have a key medium- and long-term effect by reducing recidivism and allowing young people who have engaged in criminal activity to successfully return to a social and productive life. This not only decreases the number of offenders but also increases the number of young people actively participating in initiatives that benefit the community.

Taken together, the three types of intervention generate a sustained reduction in juvenile crime, with cumulative positive effects that could not be achieved if isolated policies were implemented. This is demonstrated by the separate implementation of the controls, which show that the three controls reinforce each other. Arrests and convictions without creating opportunities for reintegration into social and work life do not have a significant impact on young people or their environment. Similarly, social reintegration programs should be complemented by prevention programs that reduce effective contact between individuals in groups $I$ and  $S$.

\section*{Conclusions}
This study develops a population-based Crime/SIR model coupled with an optimal control framework to study youth-related criminal dynamics and public policy interventions in urban contexts. The model formalizes key sociological mechanisms of social exposure, deterrence, and reintegration within a system of nonlinear differential equations, and provides analytical results on positivity, invariance, equilibrium existence, and local stability. 

The optimal control formulation allows for a systematic assessment of preventive, punitive, and social reintegration policies under resource constraints, highlighting the role of integrated strategies in reducing criminal involvement. Numerical simulations support the analytical findings and illustrate how different policy configurations affect the system’s trajectories. Although the framework is general and applicable to different cities, its practical relevance is demonstrated through numerical validation using data from En la Buena!, a youth-oriented government program currently implemented by the Municipality of Santiago de Cali. This application shows how mathematical modeling and optimal control can support evidence-based policy design and evaluation. Future work may extend the model to include spatial heterogeneity, individual-level variability, or network effects to further enhance its applicability.
One of the key findings of this study is the complementarity between controls $u_1$ and $u_3$. Specifically, when efforts are made to persuade individuals already involved in gangs to leave these environments $u_3$, it becomes necessary to reinforce preventive actions aimed at reducing the probability that individuals who have not yet engaged in criminal activities enter such pathways, as well as reducing recidivism among youth who have undergone rehabilitation in correctional facilities.

Identifying these complementarities strengthens the theoretical structure of the model by showing that crime dynamics is not driven by independent policy channels but rather by interacting mechanisms with feedback effects. This result highlights the importance of coordinated intervention strategies, suggesting that effective crime reduction requires the simultaneous implementation of preventive and rehabilitative policies rather than isolated efforts.

\section*{Appendix}
The main purpose of this appendix is to briefly present the results of Chapter 5, Section 3 of the book \citep{Maia}, inspired by the technique developed in \citep{VANDENDRIESSCHE200229} to calculate the basic reproduction number $\mathscr{R}_{0}$ of a general compartmental system. Let us consider a particular case of an SIR model, expressed (a bit differently from the traditional form) by the following system of equations
\begin{equation}\label{SIR-modificado}
\left\{
\begin{aligned}
     \dot{I}(t) &=\mathcal{F}_{1}(I(t),R(t),S(t))-(\mathcal{V}^{-}_{1}(I(t),R(t),S(t))-\mathcal{V}^{+}_{1}(I(t),R(t),S(t))),\\
     \dot{R}(t) &= \mathcal{F}_{2}(I(t),R(t),S(t))-(\mathcal{V}^{-}_{1}(I(t),R(t),S(t))-\mathcal{V}^{+}_{2}(I(t),R(t),S(t))),\\
      \dot{S}(t) &= -(\mathcal{V}^{-}_{3}(I(t),R(t),S(t))-\mathcal{V}^{+}_{3}(I(t),R(t),S(t))),
\end{aligned}
\right. 
\end{equation}


where $\mathcal{F}_{1}$ y $\mathcal{F}_{2}$, represent the rates of new infections in groups  $I$, $R$  and $S$ respectively, while $\mathcal{V}_1$, $\mathcal{V}_2$ and $\mathcal{V}_3$ incorporate the remaining transient terms, namely, births, deaths, disease progression, and recovery in the corresponding groups $I$, $R$ y $S$. Furthermore, the following conditions hold:
\begin{itemize}
    \item[a.] $\mathcal{F}_{i}(I,R,S)\geq 0$, $\mathcal{V}^{-}_{i}(I,R,S)\geq 0$ \quad \text{and} \quad $\mathcal{V}^{+}_{i}(I,R,S)\geq 0$ for all, $I,R,S\geq 0.$ 
     \item[b.] $\mathcal{V}^{-}_{1}(0,R,S)=\mathcal{V}^{-}_{2}(I,0,S)=\mathcal{V}^{-}_{3}(I,R,0)=0$. In particular
     \[
     \mathcal{V}^{-}_{1}(0,0,S)=\mathcal{V}^{-}_{1}(0,0,S)=0, \quad \text{for} \quad i=1,2. 
     \]
    \item[c.] $\mathcal{F}_{i}(0,0,S)=0$ y $\mathcal{V}^{+}_{i}(0,0,S)=0$ \text{for}  $i=1,2$.
\end{itemize}
Now, assume that the system 
\[
\dot{S}=-\big(\mathcal{V}^{-}_{3}(0,0,S)-\mathcal{V}^{+}_{3}(0,0,S)\big),
\]
admits a unique equilibrium state $\mathscr{E}_{\dagger}=(0,0,S_{\dagger})$ (the infection-free equilibrium), such that all solutions with initial condition $(0,0,S_{0})$ approach $\mathscr{E}_{\dagger}$ as $t\to \infty.$ Compute the matrices
\[
F=\left[\frac{\partial \mathcal{F}_{i}}{\partial x_{j}}\right]\bigg|_{(0,0,S_{\dagger})} \quad \text{and} \quad V=\left[\frac{\partial \mathcal{V}_{i}}{\partial x_{j}}\right]\bigg|_{(0,0,S_{\dagger})},
\]
with $\mathcal{F}_{3}(I,R,S)=0$, $\mathcal{V}_{i}=\mathcal{V}^{-}_{i}-\mathcal{V}^{+}_{i}$ and  $x_{1}=I$, $x_{2}=R$. The matrix $K=FV^{-1}$, is defined as the next-generation matrix (see Chapter 5, \citep{Maia}). The basic reproduction number $\mathscr{R}_{0}$ for system \eqref{SIR-modificado} is given by
\[
\mathscr{R}_{0}=\rho(FV^{-1}),
\]
where $\rho(C)$ is the spectral radius of matrix  $C$, that is
\[
\rho(C):=\sup \left\{|\lambda|:\lambda\in \sigma(C)\right\},
\]
with $\displaystyle{\sigma(C)}$  being the set of eigenvalues of $C$.



\bibliographystyle{plainnat}
\bibliography{bibliocrimesir}
\end{document}